\algrenewcommand\alglinenumber[1]{\tiny #1:}
\numberwithin{equation}{section}
\theoremstyle{plain}
\newtheorem{theorem}{Theorem}[section]
\newtheorem{lemma}[theorem]{Lemma}
\newtheorem{proposition}[theorem]{Proposition}
\newtheorem{definition}[theorem]{Definition}
\newtheorem{remark}[theorem]{Remark}
\newtheorem{fact}[theorem]{Fact}
\newcommand{\tht}{\bm{\theta}}
\newcommand{\omg}{\bm{\omega}}
\newcommand{\F}{\mathcal{F}}
\newcommand{\E}{\mathbb{E}}
\newcommand{\PP}{\mathbb{P}}
\newcommand{\ol}{\overline}
\newcommand{\field}[1]{\mathbbm{#1}}
\newcommand{\R}{\field{R}}
\newcommand{\C}{\field{C}}
\newcommand{\N}{\field{N}}
\newcommand{\Var}{\mathrm{Var}}
\newcommand{\dtv}{d_{\mathrm{TV}}}
\newcommand{\mtx}[1]{\bm{#1}}
\begin{document}
\begin{aug}
\title{Hypothesis testing for Markov chain Monte Carlo}
\author{Benjamin M. Gyori         \and Daniel Paulin
}

\runtitle{Hypothesis testing for Markov chain Monte Carlo}
\runauthor{BM Gyori and D Paulin} 
\address{Department of Systems Biology, Harvard Medical School \\
              E-mail: ben.gyori@gmail.com\\           %  \\
%             \emph{Present address:} of F. Author  %  if needed
           \and
           Department of Statistics and Applied Probability, National University of Singapore\\
           E-mail: paulindani@gmail.com
}

\date{}

\begin{abstract}
Testing between hypotheses, when independent sampling is possible, is a well developed subject. In this paper, we propose hypothesis tests that are applicable when the samples are obtained using Markov chain Monte Carlo. These tests are useful when one is interested in deciding whether the expected value of a certain quantity is above or below a given threshold. We show non-asymptotic error bounds and bounds on the expected number of samples for three types of tests, a fixed sample size test, a sequential test with indifference region, and a sequential test without indifference region. Our tests can lead to significant savings in sample size. We illustrate our results on an example of Bayesian parameter inference involving an ODE model of a biochemical pathway.\footnote{Source code available at github.com/bgyori/mcmchyp.}
\end{abstract}

%\vspace{10pt}

\begin{keyword}
\kwd{MCMC}
\kwd{Hypothesis test}
\kwd{Dynamical systems}
\kwd{ODE models}
\end{keyword}
\end{aug}
\maketitle

\makeatletter{}\section{Introduction}
The goal of Markov chain Monte Carlo estimation is to calculate an expected value with respect to a probability distribution from which sampling directly is impossible or impractical. 

It is often only of interest whether the expected value is above or below a certain threshold (such as whether $\E_{\pi} f > r$, where $\pi$ is a probability distribution and $f$ is the function of interest). This problem can be posed as a decision between two hypotheses, and one must make a decision with a bounded error probability. This has applications in the verification of stochastic systems (here one aims to verify whether a model satisfies a property with at least a given probability), arising in software testing, robotics and systems biology \citep{legay2010statistical}. 

In the case of independent samples, optimal fixed length and sequential hypothesis tests are available for deciding between two hypotheses (see \cite{wald1945sequential}, \cite{laitzeleung1}, \cite{laitzeleung2}, \cite{Lehmann}). However, such tests are not available in the case of samples obtained by MCMC simulation.

Our main contribution in this paper is the introduction of one fixed length test and two sequential tests (one with indifference region, and one without indifference region) that allow us to decide whether the expected value of a quantity exceeds a certain threshold. We prove non-asymptotic bounds for the probability of error (choosing the incorrect hypothesis) and the expected running times of these tests. The advantage of our approach is that the sample size needed to make the decision between the two hypothesis can be much shorter than the one needed to precisely estimate the expected value.

We note that hypothesis testing ideas have been used recently in the context of approximate MCMC algorithms in \cite{Korattikara} and \cite{BardenetMCMC}. In these papers, subsampling is used to approximate the log-likelihood, speeding up the computation of every MCMC step, at the price of sampling from an approximate distribution instead of the true one. Hypothesis testing ideas are used to bound the distance of the resulting distribution and the target distribution in total variational distance. In this paper, we take a different approach, by reducing the amount of MCMC steps needed to decide whether $\E_{\pi} f > r$ for some function $f$. The two approaches are complimentary to each other, and could, in principle be combined.

The paper is organized as follows. In Section \ref{section_preliminary}, we briefly review the necessary preliminaries from the theory of Markov chains. In Section \ref{sec:samplesize}, we state the three hypothesis tests, and their theoretical properties. Finally, in Section \ref{sec:results}, we evaluate these tests on an ODE model of a biochemical pathway, whose parameter posterior is explored using MCMC.

\makeatletter{}\section{Markov chain preliminaries}\label{section_preliminary}
In this section, we review some basic definitions about Markov chains.
\subsection{Spectral gap of general state Markov chains}
Firstly, we state the definition of the spectral gap of reversible Markov chains following \cite{Robertsgeneral} (see \cite{Kato} for more on the spectral properties of linear operators). The spectral gap is a measure of how fast the chain is mixing, and will be needed to state our error bounds for the hypothesis tests in this paper. The main reason we are using the spectral gap is that it is the basis of existing non-asymptotic bounds for MCMC empirical averages (see \cite{leon2004optimal}) that are known to be sharp in some cases. 

We call a Markov chain $X_1,X_2,\ldots$ on state space $(\Omega,\F)$ with transition kernel $P(x,dy)$ \emph{reversible} if there exists a probability measure $\pi$ on $(\Omega,\F)$ satisfying the detailed balance conditions,
\begin{equation}
\pi(dx) P(x,dy)= \pi(dy) P(y,dx) \text{ for  every } x,y\in \Omega.
\end{equation}
Define $L_2(\pi)$ as the Hilbert space of complex valued measurable functions that are square integrable with respect to $\pi$, endowed with the inner product $\left<f,g\right>_{\pi}=\int f g^*\, \mathrm{d}\pi$. $P$ can be then viewed as a linear operator on $L_2(\pi)$, denoted by $\mtx{P}$, defined as \[(\mtx{P} f)(x):=\E_{P(x,\cdot)}(f),\] and reversibility is equivalent to the self-adjointness of $\mtx{P}$.
The operator $\mtx{P}$ acts on measures to the left, i.e.\ for every measurable subset $A$ of $\Omega$, \[(\mu \mtx{P})(A):=\int_{x\in \Omega} P(x,A) \mu(\mathrm{d} x).\]
For a Markov chain with stationary distribution $\pi$, we define the \emph{spectrum} of the chain as
\begin{align*}
S_2:=\{\lambda\in \C\setminus 0:& (\lambda\mtx{I}-\mtx{P})^{-1}\text{ does not exist as a}\\
&\text{bounded linear operator on } L_2(\pi)\}.
\end{align*}
For reversible chains, $S_2$ lies on the real line. The following is the main definition of this section.
\begin{definition}\label{spectralgapdef}
The \emph{spectral gap} for reversible chains is defined as
\begin{equation*}
\gamma :=
\begin{cases}
1-\sup\{\lambda: \lambda\in S_2, \lambda\ne 1\}  & \text{if eigenvalue 1 has} \\
& \text{multiplicity 1, and } \\
0  & \text{otherwise}.
\end{cases}
\end{equation*}
Similary, we define the \emph{absolute spectral gap} as
\begin{equation*}
\gamma^*:=
\begin{cases}
1-\sup\{|\lambda|: \lambda\in S_2, \lambda\ne 1\}  & \text{if eigenvalue 1 has} \\
& \text{multiplicity 1, and } \\
0  & \text{otherwise}.
\end{cases}
\end{equation*}

\end{definition}
It follows from the definition that $\gamma\ge \gamma^*$. They are often equal in practice, and it is simple to modify the MCMC steps such that $\gamma=\gamma^*$ (by considering the so-called ``lazy'' version of the chain). 
For reversible chains, it is known that $\gamma^*>0$ implies geometric ergodicity, and the CLT for any function $f\in L_2(\pi)$. Moreover, for any initial distribution $\nu$, and any $k$,
\[\dtv(\nu P^k, \pi)\le (1-\gamma^*)^k \left\|\frac{\mathrm{d}\nu}{\mathrm{d}\pi}-1\right\|_{2,\pi},\]
so the absolute spectral gap is related to the speed of convergence to equilibrium (in total variational distance).

In the case of non-reversible chains, \cite{Martoncoupling} defines the pseudo-spectral gap, and shows that it has similar properties as the spectral gap has for reversible chains.

\subsection{Estimating the spectral gap}
In practice, the spectral gap is often not known, and has to be estimated from the output of the chain. 
In the Appendix, we propose an estimator based on the following simple fact.
\begin{fact} Suppose that $f:\Omega\to \R$ is such that $\E_{\pi}(f)=0$, and it satisfies an additional technical assumption. Let $\rho_{\eta}(f):=\E_{\pi}(f(X_0)f(X_{\eta}))$, then
\begin{equation}\lim_{\eta\to \infty}(\rho_{\eta}(f)/\Var_{\pi}(f))^{1/\eta}=1-\gamma^*.\end{equation}
\end{fact}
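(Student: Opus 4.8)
The plan is to move to the spectral side: write $\rho_\eta(f)$ as the $\eta$-th moment of a scalar spectral measure attached to $f$, and then read off the exponential rate from the spectral radius of that measure.

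\textbf{Step 1.} I would first use stationarity and the Markov property to write $\rho_\eta(f)=\E_\pi\bigl(f(X_0)\,(\mtx{P}^{\eta} f)(X_0)\bigr)=\langle f,\mtx{P}^{\eta} f\rangle_\pi$. Reversibility makes $\mtx{P}$ self-adjoint on $L_2(\pi)$, and (by Jensen's inequality together with stationarity of $\pi$) a contraction, so its spectrum is real and lies in $[-1,1]$. The spectral theorem for bounded self-adjoint operators then supplies a projection-valued measure $E$ on $[-1,1]$ with $\mtx{P}^{\eta}=\int\lambda^\eta\,dE(\lambda)$, and setting $\mu_f(\cdot):=\langle f,E(\cdot)f\rangle_\pi$ yields $\rho_\eta(f)=\int_{-1}^1\lambda^\eta\,d\mu_f(\lambda)$.

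\textbf{Step 2.} Next I would locate $\mu_f$: it is a positive measure of total mass $\|f\|_\pi^2=\Var_\pi(f)$ (here $\E_\pi f=0$ is used), and since eigenvalue $1$ has multiplicity one, $E(\{1\})$ is the projection onto the constant functions, so $\langle f,1\rangle_\pi=\E_\pi f=0$ gives $\mu_f(\{1\})=0$. By Definition~\ref{spectralgapdef} the rest of the spectrum lies in $[-(1-\gamma^*),1-\gamma^*]$, so $\mu_f$ is carried there; normalising, $\nu_f:=\mu_f/\Var_\pi(f)$ is a probability measure on that interval with $\rho_\eta(f)/\Var_\pi(f)=\int\lambda^\eta\,d\nu_f(\lambda)$.

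\textbf{Step 3.} The upper bound is then immediate, $\bigl|\rho_\eta(f)/\Var_\pi(f)\bigr|\le\int|\lambda|^\eta\,d\nu_f\le(1-\gamma^*)^\eta$, giving $\limsup_\eta\bigl|\rho_\eta(f)/\Var_\pi(f)\bigr|^{1/\eta}\le 1-\gamma^*$. For the matching lower bound I would work with even lags to avoid cancellation: $\rho_{2\eta}(f)/\Var_\pi(f)=\int\lambda^{2\eta}\,d\nu_f\ge\bigl((1-\gamma^*)-\epsilon\bigr)^{2\eta}\,\nu_f\bigl(\{|\lambda|\ge(1-\gamma^*)-\epsilon\}\bigr)$, and as long as $\mu_f$ charges every neighbourhood of the spectral radius $1-\gamma^*$, the mass on the right is positive, so sending $\eta\to\infty$ and then $\epsilon\downarrow 0$ gives $\liminf_\eta\bigl(\rho_{2\eta}(f)/\Var_\pi(f)\bigr)^{1/(2\eta)}\ge 1-\gamma^*$. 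The ``additional technical assumption'' on $f$ is precisely what makes this rigorous: it should guarantee both that $\mu_f$ reaches the spectral radius and that $\rho_\eta(f)$ eventually keeps a fixed sign, so that the $\eta$-th root is well defined and the limit, taken over all lags rather than just the even ones, equals $1-\gamma^*$.

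\textbf{Where the difficulty lies.} I expect Steps~1--2 and the upper bound to be routine once the spectral theorem is invoked; the real work is the lower bound. Without a hypothesis on $f$ the sequence $\bigl(\rho_\eta(f)/\Var_\pi(f)\bigr)^{1/\eta}$ need not even converge --- for instance, if $f$ is an eigenfunction of a negative eigenvalue the autocorrelations alternate in sign, and if $\mu_f$ is supported strictly inside $(-(1-\gamma^*),1-\gamma^*)$ the true rate is smaller --- so the crux of the argument is identifying and exploiting the precise regularity condition on $f$ under which the stated identity holds.
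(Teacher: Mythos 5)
Your proof is correct and rests on the same underlying idea as the paper's --- expanding the autocorrelation $\rho_\eta(f)=\langle f,\mtx{P}^\eta f\rangle_\pi$ in the spectral decomposition of $\mtx{P}$ and reading off the exponential rate --- but you carry it out in greater generality. The paper (Lemma \ref{rholimlemma} in the Appendix) restricts to finite state spaces, writes $f=\sum_{i\ge 2}c_i f_i$ in an orthonormal eigenbasis of $\mtx{P}$, obtains $\rho_\eta(f)/\Var_\pi(f)=\sum_{i\ge 2}c_i^2\lambda_i^\eta\big/\sum_{i\ge 2}c_i^2$, and names the ``additional technical assumption'' concretely: $f$ must not be orthogonal to the eigenspace of the eigenvalue of second largest absolute value --- in your language, the spectral measure $\mu_f$ has an atom at modulus $1-\gamma^*$, which makes your lower bound immediate without the $\epsilon$-limiting argument. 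Your projection-valued-measure version is precisely the general-state-space extension that the paper only alludes to in a remark, and your weaker condition (that $\mu_f$ charges every neighbourhood of the spectral radius) indeed suffices there. The sign issue you flag is also real: the paper's appendix lemma is stated for $(|\rho_\eta(f)|/\Var_\pi(f))^{1/\eta}$ rather than without absolute values, which is exactly how it sidesteps the alternating-sign phenomenon you describe, so the Fact as stated in the main text should be read with that absolute value (or with your fixed-sign proviso) in force.
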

From the output of the chain we can estimate $\rho_{\eta}(f)$, which in turn leads to estimates for $\gamma^*$. For further details on the exact procedure, as well as some numerical results, we refer the reader to the Appendix. One important point to note is that the proposed procedure is quite fast, and does not affect the overall running time significantly.

\makeatletter{}\section{Hypothesis tests and error bounds}\label{sec:samplesize}
In this section we present bounds on the error of the hypothesis tests on MCMC estimates. First we review a Hoeffding-type inequality for reversible Markov chains, and then introduce several hypothesis tests based on it. 

An important assumption of the concentration inequalities is that the Markov chain is stationary. To ensure this, we are going to discard the first $t_0$ terms of the Markov chain ($t_0$ is commonly called the burn-in time \citep{GilksMCMC}), and only take into account the terms $f(X_{t_0+1}),\ldots,f(X_{n})$. Similarly to \cite{nonasymptotic}, we will set $t_0 \ge 30/\gamma$. In what follows, we will not show $t_0$ explicitly, and assume that sufficient burn-in steps were performed such that $f(X_{1})$,$\ldots$,$f(X_{n})$ is approximately stationary.

\subsection{Concentration of the MCMC estimate}
In this section, we review a Hoeffding-type inequality for reversible Markov chains (this will be used to bound the errors of our hypothesis tests).
This result was proven for Markov chains on finite state spaces in \cite{leon2004optimal}, and extended to general state spaces in \cite{MiasojedowHoeffding}.
\begin{theorem}[Hoeffding inequality for reversible chains]\label{thmhoeffdingrev}
Let $X_1,\ldots,$ $X_n$ be a stationary, reversible Mar\-kov chain with spectral gap $\gamma$, and unique stationary distribution $\pi$. Let $f\in L^2(\pi)$ such that $0 \le f(x)\le 1$ for every $x\in \Omega$. Let $S_n:=\sum_{i=1}^n f(X_i)$, then for any $t\ge 0$, 
\begin{equation}\label{Hoeffdingrevempeq}
\PP(|S_n-n\cdot \E_{\pi} f|\ge t)\le 2 \exp\left(-\frac{t^2\cdot \gamma}{n}\right).
\end{equation}
\end{theorem}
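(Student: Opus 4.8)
The plan is to run the Markov-chain Chernoff argument of \cite{leon2004optimal}, in the operator form that extends to general state spaces (\cite{MiasojedowHoeffding}). First I would symmetrize: since $1-f$ also takes values in $[0,1]$ along the same chain and $S_n(1-f)=n-S_n(f)$, the lower tail satisfies $\PP(S_n-n\E_\pi f\le -t)=\PP(S_n(1-f)-n\E_\pi(1-f)\ge t)$, so it suffices to prove the one-sided bound $\PP(S_n-n\E_\pi f\ge t)\le \exp(-t^2\gamma/n)$ and then pay a factor $2$; note that only the one-sided spectral gap $\gamma$ enters, since an upper tail is governed by the top of the spectrum of $\mtx{P}$ alone. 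For the one-sided bound, the exponential Markov inequality gives, for $\theta>0$ and with $g:=f-\E_\pi f$ (so $\E_\pi g=0$ and $g$ has range of length $\le 1$),
\begin{equation}\label{mgfgoal}
\PP(S_n-n\E_\pi f\ge t)\ \le\ e^{-\theta t}\,\E\Big[\exp\Big(\theta\textstyle\sum_{i=1}^n g(X_i)\Big)\Big].
\end{equation}
The whole statement then reduces to the moment generating function bound $\E[\exp(\theta\sum_{i=1}^n g(X_i))]\le \exp(n\theta^2/(4\gamma))$, after which the choice $\theta=2\gamma t/n$ turns the right-hand side of \eqref{mgfgoal} into $e^{-\theta t+n\theta^2/(4\gamma)}=e^{-\gamma t^2/n}$.

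To prove the MGF bound, write $\mtx{M}_\theta$ for multiplication by $e^{\theta g}$ on $L_2(\pi)$ and expand $\E[\exp(\theta\sum_i g(X_i))]=\big\langle \mtx{M}_\theta^{1/2}\II,\,(\mtx{M}_\theta^{1/2}\mtx{P}\mtx{M}_\theta^{1/2})^{n-1}\mtx{M}_\theta^{1/2}\II\big\rangle_\pi$. The idea is to replace $\mtx{P}$ by the extremal ``refresh with probability $\gamma$'' operator $\mtx{P}_\gamma h:=(1-\gamma)h+\gamma(\E_\pi h)\,\II$, which has spectral gap exactly $\gamma$; the key claim is that the $n$-step MGF of any reversible chain with spectral gap $\ge\gamma$ is dominated by that of $\mtx{P}_\gamma$. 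The quadratic-form inequality $\langle h,\mtx{P}h\rangle_\pi\le\langle h,\mtx{P}_\gamma h\rangle_\pi$ on $L_2(\pi)$ follows directly from Definition~\ref{spectralgapdef}, but lifting it to the $n$-fold product in the expansion is the subtle step, and is essentially the variational/convexity argument of \cite{leon2004optimal} (who in fact reduce further, to a two-point distribution for $g$ and a two-state chain); its general-state-space form follows by finite-state approximation as in \cite{MiasojedowHoeffding}. Once one is reduced to $\mtx{P}_\gamma$, the MGF is handled by a run-length decomposition: a trajectory of $\mtx{P}_\gamma$ consists of runs on which the state equals a fixed fresh draw $Y_j\sim\pi$, the $j$-th run having (truncated) geometric length $L_j$ with $\sum_j L_j=n$, so that $\sum_i g(X_i)=\sum_j L_j g(Y_j)$ and, conditioning on the $L_j$ and applying Hoeffding's lemma $\E_\pi[e^{sg}]\le e^{s^2/8}$ within each run,
\begin{equation}\label{runbound}
\E\Big[\exp\Big(\theta\textstyle\sum_i g(X_i)\Big)\Big]\ \le\ \E\Big[\textstyle\prod_j e^{\theta^2 L_j^2/8}\Big]\ =\ \E\big[e^{\theta^2(\sum_j L_j^2)/8}\big];
\end{equation}
since the number of runs is about $n\gamma$ and $\E[L_j^2]\approx 2/\gamma^2$, one has $\E[\sum_j L_j^2]\lesssim 2n/\gamma$, and controlling this exponential moment yields the desired $\exp(n\theta^2/(4\gamma))$ (this last estimate is precisely where the exact two-state computation of \cite{leon2004optimal} gives the full rate function).

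I expect the extremal comparison to be the main obstacle: the map $\mtx{A}\mapsto(\mtx{M}_\theta^{1/2}\mtx{A}\mtx{M}_\theta^{1/2})^{n-1}$ is not monotone in the quadratic-form order (squaring alone already fails to be operator monotone), so the inequality is not termwise and genuinely exploits the rank-one-perturbation structure of $\mtx{P}_\gamma$. The remaining ingredients are routine: the stationarity hypothesis is secured by the burn-in discussed in Section~\ref{sec:samplesize}, and \eqref{runbound} only needs to be made rigorous rather than asymptotic. In fact \cite{leon2004optimal} obtain the sharper exponent $\tfrac{2\gamma}{n(2-\gamma)}$ in place of $\tfrac{\gamma}{n}$, and Theorem~\ref{thmhoeffdingrev} is the clean weakening of that estimate; the general-state-space statement is \cite{MiasojedowHoeffding}.
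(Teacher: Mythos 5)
The paper does not actually reprove Theorem \ref{thmhoeffdingrev}: it is quoted, with the one-line remark that it follows from equation (3) of \cite{leon2004optimal} by rescaling (their one-sided exponent $\frac{2\gamma}{2-\gamma}\cdot\frac{t^2}{n}$ is weakened to $\frac{\gamma t^2}{n}$ via $2-\gamma\le 2$, and a union bound over the two tails gives the factor $2$), with \cite{MiasojedowHoeffding} supplying the general-state-space version. Your outer reduction is correct and is, in effect, exactly this derivation: the symmetrization through $1-f$, the Chernoff step with $\theta=2\gamma t/n$, the observation that only the upper edge of the spectrum (hence $\gamma$, not $\gamma^*$) matters for an upper tail, and the closing remark identifying the theorem as a weakening of the León--Perron exponent are all right. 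You also correctly locate the heart of the cited proof in the comparison with the extremal kernel $\mtx{P}_\gamma h=(1-\gamma)h+\gamma(\E_\pi h)\II$ and correctly note that this domination is not a termwise consequence of quadratic-form monotonicity; but you do not prove it, you defer it to \cite{leon2004optimal}/\cite{MiasojedowHoeffding}, so up to that point your proposal adds no proof beyond the citation the paper already makes.

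The genuine gap is in the one step you do attempt in detail, the treatment of the extremal chain by run lengths. After conditioning on the run lengths $L_j$ and applying Hoeffding's lemma within each run you are left with $\E\bigl[\exp\bigl(\theta^2\sum_j L_j^2/8\bigr)\bigr]$, and your argument for bounding this by $\exp\bigl(n\theta^2/(4\gamma)\bigr)$ is that $\E\bigl[\sum_j L_j^2\bigr]\lesssim 2n/\gamma$. That controls only the mean, not the exponential moment (Jensen gives $\E e^Z\ge e^{\E Z}$, the wrong direction), and the claimed inequality is in fact false in the regime the theorem is used in: the event that the first run covers all $n$ steps has probability $(1-\gamma)^{n-1}$ and alone contributes $(1-\gamma)^{n-1}e^{\theta^2 n^2/8}$, which for $\theta$ of order $\gamma$ (i.e.\ deviations $t$ of order $n$, the constant-$\delta$ hypothesis-testing regime) is of order $e^{cn^2-\gamma n}$ and dwarfs $e^{n\theta^2/(4\gamma)}$. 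The squared geometric run length in the exponent cannot be absorbed by its geometric tail. León and Perron do not argue this way: having reduced to $\mtx{P}_\gamma$ (and further to a two-state chain with $g$ an indicator), they control the moment generating function through the largest eigenvalue of the tilted operator $\mtx{M}_\theta^{1/2}\mtx{P}_\gamma\mtx{M}_\theta^{1/2}$, i.e.\ they compute the exact Chernoff rate function of the extremal chain and only then pass to the quadratic bound. To make your sketch into a proof you must replace the run-length step by that spectral computation (or, as the paper does, simply cite the result and perform the rescaling).
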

This form of the result follows from equation (3) of  \cite{leon2004optimal} by rescaling. A similar inequality can be shown to hold for non-reversible chains with the spectral gap being replaced by the pseudo spectral gap, see \cite{Martoncoupling} for more details.

In the case of non-stationary chains, one can show that essentially the same result holds (see Propositions 3.3 and 3.4 of \cite{Martoncoupling}).

\subsection{Hypothesis tests with indifference region}
Suppose that $X_1,X_2,\ldots$ is a reversible Markov chain taking values in a Polish state space $\Omega$, with unique stationary distribution $\pi$, and $f:\Omega \to [0,1]$ is a bounded function (with a simple scaling argument, our results extend to the case when $f:\Omega \to [\alpha,\beta]$). 

Our first objective is to do a test between the following two hypotheses, given $r\in (0,1)$ and $\delta\in (0,\min(r,1-r))$.
\begin{align}\label{eq:hypindiff}
H_0&: \E_{\pi} f\ge r+\delta, \\
H_1&: \E_{\pi} f\le r-\delta. \nonumber
\end{align}
Here $(r-\delta,r+\delta)$ is an indifference region in which choosing either hypothesis is acceptable. 

We first discuss two tests to choose between these hypotheses. The first one is a fixed sample size test, while the second one is a sequential test. 

\subsubsection{Fixed length hypothesis test}
Suppose that we have a sample of length $n$ consisting of the values $f(X_1),\ldots$, $f(X_n)$. The fixed length hypothesis test is shown in Algorithm \ref{alg:fixed}.

\begin{algorithm}[h]
\linespread{1}
\renewcommand{\thealgorithm}{1}
\caption{Fixed length hypothesis test}
\small
\label{alg:fixed}
Input: Threshold $r$, number of samples $n$

Output: Choice of $H_0$ or $H_1$.
	\begin{algorithmic}
		\State Set $S_0 := 0$
		\For{$i:=1 \ldots n$}
			\State $S_i := S_{i-1} + f(X_i)$
		\EndFor	
		
		\If{$S_n \ge nr$}
			\State \textbf{return} $H_0$
		\Else
			\State \textbf{return} $H_1$
		\EndIf
	\end{algorithmic}
\end{algorithm}

The next proposition bounds the error probability of the test in Algorithm \ref{alg:fixed}, where an error constitutes accepting hypothesis $H_1$ when in fact hypothesis $H_0$ holds, and vice-versa.
\begin{proposition}[Error bound for fixed length hypothesis test]\label{propdettest}
For the fixed length hypothesis test, the error rate is bounded by
\begin{equation}
\exp\left(-\gamma \delta^2 n\right).
\end{equation}
\end{proposition}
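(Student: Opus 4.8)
The plan is to decompose the error event according to which hypothesis is in fact true, and in each case to contain the ``wrong decision'' event inside a one-sided deviation event for the empirical average $S_n/n$ around its mean $\E_{\pi} f$, which is then controlled by the Hoeffding inequality of Theorem~\ref{thmhoeffdingrev}. Throughout we apply that theorem to the (approximately stationary, after burn-in) chain $f(X_1),\ldots,f(X_n)$.

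First I would treat the case where $H_0$ holds, so that $\E_{\pi} f \ge r+\delta$. Algorithm~\ref{alg:fixed} returns the wrong answer $H_1$ exactly when $S_n < nr$; since $n\E_{\pi} f \ge n(r+\delta)$, this event is contained in $\{\,n\E_{\pi} f - S_n > n\delta\,\}$. Symmetrically, if $H_1$ holds, so $\E_{\pi} f \le r-\delta$, the test errs (returning $H_0$) exactly when $S_n \ge nr$, and since $n\E_{\pi} f \le n(r-\delta)$ this is contained in $\{\,S_n - n\E_{\pi} f \ge n\delta\,\}$. In both cases the error event lies in a single tail of the deviation $|S_n - n\E_{\pi} f|$ at level $t = n\delta$.

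Next I would invoke the one-sided form of the Hoeffding inequality for reversible chains. Equation (3) of \cite{leon2004optimal} gives (after the rescaling noted below Theorem~\ref{thmhoeffdingrev}) the one-tailed bound $\PP(S_n - n\E_{\pi} f \ge t) \le \exp(-t^2\gamma/n)$, and applying the same statement to $1-f$ yields the matching lower-tail bound $\PP(n\E_{\pi} f - S_n \ge t) \le \exp(-t^2\gamma/n)$. Taking $t = n\delta$ in the relevant one of these two inequalities, the analysis of the previous paragraph gives, whichever hypothesis holds,
\begin{equation*}
\PP(\text{wrong decision}) \le \exp\!\left(-\frac{(n\delta)^2\gamma}{n}\right) = \exp\!\left(-\gamma\delta^2 n\right),
\end{equation*}
which is the asserted bound.

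The only point requiring care is the constant: a direct appeal to the two-sided inequality \eqref{Hoeffdingrevempeq} would produce a spurious factor of $2$. The sharpening to $\exp(-\gamma\delta^2 n)$ rests on the observation that under $H_0$ only the lower tail, and under $H_1$ only the upper tail, can trigger an error, so the one-sided inequality is exactly what is needed. This is the main (and essentially the only) subtlety in the argument; everything else is the elementary reduction of the decision rule to a large-deviation event.
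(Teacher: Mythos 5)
Your proof is correct and follows essentially the same route as the paper: under each hypothesis the wrong decision is reduced to a one-sided deviation of $S_n$ from $n\E_{\pi}f$ at level $n\delta$, which is then killed by the Hoeffding bound for reversible chains. Your explicit justification of the one-sided inequality (via equation (3) of L\'eon--Perron and the $1-f$ trick for the lower tail) is in fact slightly more careful than the paper's own proof, which cites the two-sided Theorem \ref{thmhoeffdingrev} --- literally yielding an extra factor of $2$ --- while implicitly using the one-sided form, exactly the subtlety you flag.
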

\begin{proof}
Suppose that $H_1$ holds, implying that $\E_{\pi}f \le r-\delta$. An error is made ($H_0$ is chosen) if $S_n \ge nr$. From here 
\begin{equation*}
S_n-n\E_{\pi}f \ge nr-n(r-\delta) = n\delta.
\end{equation*}
Applying the Hoeffding inequality (Theorem \ref{thmhoeffdingrev}), we get
\[\PP(S_n-n\E_{\pi}f \ge n\delta) \le \exp\left( -\gamma\delta^2 n \right).\]
The same holds under the opposite hypothesis.
\end{proof}

This implies, in particular, that if we want the error to be smaller than $\epsilon$, then
\begin{equation}
\label{eq:nbound}n\ge \frac{\log(1/\epsilon)}{\gamma \delta^2}
\end{equation}
samples are sufficient.

\subsubsection{Sequential hypothesis test}
The main idea behind a sequential test is to monitor the empirical sum, and stop if the samples collected so far are sufficient to decide between the hypotheses. Such a sequential test is shown in Algorithm \ref{alg:seq}. Although testing in every step is intuitive, because we bound the error by a union bound based on concentration inequalities, we obtain sharper results if we test not in every step, but only at stages $n_i:=\left\lfloor n_0 (1+\xi)^i\right\rfloor$ for some parameters $\xi>0$, and $n_0$. We will choose $n_0:=\left\lfloor M\min\left(\frac{1}{1-r},\frac{1}{r}\right)\right\rfloor$.

\begin{algorithm}[h]
\caption{Sequential hypothesis test with indifference region}
\small
\label{alg:seq}
Input: Threshold $r$, stopping condition $M$, testing param. $\xi$

Output: Choice of $H_0$ or $H_1$.
	\begin{algorithmic}
		\State Set $S_{n_{0}} := \sum_{1\le k\le n_0}f(X_k)$ and $i:=1$
		\Loop
			\State $S_{n_i} := S_{n_{i-1}} + \sum_{n_{i-1}<k\le n_i}f(X_k)$
			\If{$S_{n_i} \ge n_i r + M$}
				\State \textbf{return} $H_0$
			\ElsIf{$S_{n_i} \le n_i r - M$}
				\State \textbf{return} $H_1$
			\Else
				\State Set $i:=i+1$ and continue
			\EndIf
		\EndLoop	
	\end{algorithmic}
\end{algorithm}

The following proposition bounds the error probability of the test in Algorithm \ref{alg:seq} with a particular choice of $M$.
\begin{proposition}[Error bound for sequential hypothesis test]\label{propseqtest}
Suppose that $\xi\le 0.4$ and $\epsilon\le 0.4$, and choose
\begin{equation}\label{eq:seqtestM}
M:= \frac{\log\left(2/\sqrt{\epsilon \xi}\right)}{2\gamma\delta}.
\end{equation}
For the sequential hypothesis test, the probability of an error is bounded by $\epsilon$.
\end{proposition}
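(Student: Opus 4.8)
The plan is to bound the error probability by a union bound over the testing stages $n_1, n_2, \ldots$, using the Hoeffding inequality of Theorem \ref{thmhoeffdingrev} at each stage, and then sum the resulting geometric-type series. Suppose without loss of generality that $H_1$ holds, so $\E_{\pi} f \le r - \delta$. An error occurs only if at some stage $n_i$ we have $S_{n_i} \ge n_i r + M$, which forces $S_{n_i} - n_i \E_{\pi} f \ge n_i \delta + M$. By Theorem \ref{thmhoeffdingrev}, the probability of this event at a fixed stage $n_i$ is at most $\exp\!\left(-\gamma (n_i \delta + M)^2 / n_i\right) \le \exp\!\left(-\gamma \delta^2 n_i - 2\gamma \delta M\right)$, where I have dropped the nonnegative $M^2/n_i$ term inside the square. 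The factor $\exp(-2\gamma\delta M)$ is, by the choice of $M$ in \eqref{eq:seqtestM}, exactly $\sqrt{\epsilon\xi}/2$.

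Next I would sum over $i \ge 1$. Since $n_i = \lfloor n_0 (1+\xi)^i \rfloor \ge n_0 (1+\xi)^i - 1$, and using $n_0 \ge M \min(1/(1-r), 1/r) \ge M$ (recall $r \in (0,1)$), the exponents $\gamma\delta^2 n_i$ grow geometrically in $i$, so $\sum_{i \ge 1} \exp(-\gamma \delta^2 n_i)$ is dominated by a convergent series. The key quantitative step is to show this sum is at most something like $\frac{2}{\sqrt{\epsilon\xi}} \cdot \frac{\sqrt{\epsilon\xi}}{2} \cdot (\text{something} \le 1)$; more precisely, one wants $\sum_{i\ge 1}\exp(-\gamma\delta^2 n_i) \le 2/\sqrt{\epsilon\xi}$ after which multiplying by the per-stage factor $\exp(-2\gamma\delta M) = \sqrt{\epsilon\xi}/2$ yields a total of at most $1$ — no, one needs a bit more care to land at $\epsilon$. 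The cleaner route: bound each term by $\exp(-2\gamma\delta M) = \sqrt{\epsilon\xi}/2$ times a geometric factor in $i$ coming from $n_i \ge n_0(1+\xi)^i - 1$, and use $\xi \le 0.4$, $\epsilon \le 0.4$, together with the definition of $n_0$ (which ties $n_0$ to $M$ and hence makes $\gamma\delta^2 n_0 \ge \gamma\delta^2 M \ge \tfrac12 \log(2/\sqrt{\epsilon\xi}) \cdot \delta/(1\cdot\ldots)$ — so that the $i=1$ term is already small) to bound the full series by $\epsilon/2$. Doing the same for the case $H_0$ holds and adding gives total error at most $\epsilon$.

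The main obstacle is the bookkeeping in the geometric sum: one must track the floor function in $n_i$, the relationship between $n_0$ and $M$ forced by the definition $n_0 = \lfloor M \min(1/(1-r), 1/r)\rfloor$, and verify that the constants $\xi \le 0.4$ and $\epsilon \le 0.4$ are exactly what is needed to make the series bound close. In particular the term $\min(1/(1-r),1/r) \ge 1$ ensures $n_0 \ge M$ up to the floor, and the interplay of $\gamma\delta^2 n_0$ versus $2\gamma\delta M$ (a ratio of $\delta \cdot \min(1/(1-r),1/r)$, which can be small if $\delta$ is small) means one cannot rely on $n_0$ alone to kill the series — the geometric growth $(1+\xi)^i$ must do the work for large $i$, while the uniform factor $\exp(-2\gamma\delta M) = \sqrt{\epsilon\xi}/2$ handles the $i=1$ term. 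Carefully splitting the sum into these two regimes and checking the numerics under $\xi, \epsilon \le 0.4$ is where the real work lies; the concentration input itself is a one-line application of Theorem \ref{thmhoeffdingrev}.
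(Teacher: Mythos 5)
Your overall plan (union bound over the testing stages plus Theorem \ref{thmhoeffdingrev} at each stage) is the same as the paper's, but the step where you drop the $M^2/n_i$ term inside the square is precisely where the argument fails to close. That term is not a harmless nonnegative leftover: it is the only source of smallness at the early stages. After dropping it, your per-stage bound is $\epsilon'\exp(-\gamma\delta^2 n_i)$ with $\epsilon':=\exp(-2\gamma\delta M)=\sqrt{\epsilon\xi}/2$. Since $n_0=\lfloor M\min(1/(1-r),1/r)\rfloor\le 2M$, we have $\gamma\delta^2 n_1\le \delta\,\log(1/\epsilon')(1+\xi)$, which tends to $0$ as $\delta\to 0$; so for small $\delta$ your bound for the single stage $i=1$ is already essentially $\sqrt{\epsilon\xi}/2$, which exceeds the target $\epsilon$ whenever $\xi>4\epsilon$ (e.g.\ $\epsilon=0.01$, $\xi=0.3$ gives $\epsilon'\approx 0.027$). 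Worse, the number of stages with $\gamma\delta^2 n_i\le 1$ grows like $\log(1/\delta)/\log(1+\xi)$, so $\epsilon'\sum_{i\ge 1}\exp(-\gamma\delta^2 n_i)$ is unbounded as $\delta\to 0$ and cannot be forced below $\epsilon$ uniformly in $\delta$. Your closing remark that the uniform factor ``handles the $i=1$ term'' while geometric growth handles large $i$ therefore does not suffice: once $M^2/n_i$ is discarded, nothing controls the (unboundedly many) stages with $n_i\ll M/\delta$, and the claimed conclusion cannot be reached from your estimates (the proposition itself is still true, but your chain of bounds is too weak to prove it).

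The paper's proof keeps the full exponent $\gamma(n_i\delta+M)^2/n_i=\gamma n_i\delta^2+2\gamma\delta M+\gamma M^2/n_i$ and rewrites it as $\log(1/\epsilon')\bigl(1+\tfrac12(x_i+1/x_i)\bigr)$ with $x_i:=2\gamma n_i\delta^2/\log(1/\epsilon')$. The $1/x_i$ part, i.e.\ exactly the $M^2/n_i$ term you dropped, is what makes the early stages ($x_i\ll 1$, equivalently $n_i\ll M/\delta$) contribute little, symmetrically to how $x_i$ handles the late stages. Using the monotonicity of $x+1/x$ on $(0,1]$ and $[1,\infty)$, replacing the $x_i$ by the sequence $(1+\xi)^k$ (at the cost of a factor $2$), and grouping the indices $k$ into blocks of length about $1.6/\xi$ on which $(1+\xi)^k+(1+\xi)^{-k}\ge 2(j+1)$, the paper bounds the whole sum by a geometric series in $\epsilon'$, arriving at $4(\epsilon')^2/\xi=\epsilon$; this is where the hypotheses $\xi\le 0.4$, $\epsilon\le 0.4$ enter. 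To repair your argument you must retain the $M^2/n_i$ term (or an equivalent mechanism) for stages with $n_i\lesssim M/\delta$; dropping it is only safe once $n_i\delta\gtrsim M$.
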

\begin{proof} 
Let $\epsilon':=\frac{\sqrt{\epsilon \xi}}{2}$, then $M=\frac{\log\left(1/\epsilon'\right)}{2\gamma\delta}$. Suppose $H_1$ holds, implying that $\E_{\pi}f \le r-\delta$.
It is easy to see that the probability of choosing $H_0$ is bounded by the sum
\begin{equation}\label{type1sumeq}
\sum_{i=1}^{\infty}\PP(S_{n_i}\ge n_i r+M).
\end{equation}
By Theorem \ref{thmhoeffdingrev}, for any $i\ge 1$, we have
\begin{align*}
&\PP(S_{n_i}\ge n_{i}r+M)\le \PP(S_{n_i}\ge n_i\E_{\pi}f +n_i\delta+M)\\
&\le \exp\left(-\frac{\gamma (n_i\delta+M)^2}{n_i}\right)\\
&= \exp\left(-\gamma \left(n_i\delta^2 +2M\delta  +\frac{M^2}{n_i}\right)\right)\\
&=\exp\left(-\gamma n_i \delta^2 - \log\left(1/\epsilon'\right)-\frac{\log^2\left(1/\epsilon'\right)}{4 \gamma n_i \delta^2}\right)\\
&=\exp\left(-\log\left(1/\epsilon'\right)\left(1+\frac{1}{2} \frac{2\gamma n_i \delta^2}{\log\left(1/\epsilon'\right)}+\frac{1}{2} \frac{\log\left(1/\epsilon'\right)}{2\gamma n_i \delta^2}\right)\right).\end{align*}
Let $x_i:=\frac{2\gamma n_i \delta^2}{\log\left(1/\epsilon'\right)}$, 
then we have
\begin{align*}
&\sum_{i=1}^{\infty}\PP(S_{n_i}\ge n_{i}r+M) \\
&\le \sum_{i\ge 1}\exp\left(\log(\epsilon')\left(1+\frac{1}{2}(x_i+1/x_i)\right)\right)\\ 
&=\epsilon'\sum_{i\ge 1}\exp\left(\frac{\log(\epsilon')}{2}(x_i+1/x_i)\right).
\end{align*}
Note that $x+1/x$ is monotone decreasing in the interval $(0,1]$, and monotone increasing in the interval $[1,\infty]$.
Now using the definitions $n_i:= \lfloor n_0(1+\xi)^i\rfloor$, it is easy to see that 
we can upper bound this sum by replacing the sequence of $x_i$s with a sequence of $(1+\xi)^k$ for $k=0,1,\ldots$, and multiplying that sum by two, that is,
\begin{align*}
&\sum_{i=1}^{\infty}\PP(S_{n_i} \ge n_{i}r+M)\\
&\le 2\epsilon' \sum_{k=0}^{\infty}
\exp\left(\frac{\log(\epsilon')}{2}\left((1+\xi)^k+(1+\xi)^{-k} \right)\right).
\end{align*}
Using the assumption $0<\xi\le 0.4$, one can show that $(1+\xi)^k+(1+\xi)^{-k}\ge 2(j+1)$ for $\frac{ 1.6\cdot j}{\xi}\le k< \frac{ 1.6\cdot (j+1) }{\xi}$, with $j\in \N$. Therefore the above sum can be bounded as
\begin{align*}
&\sum_{i=1}^{\infty}\PP(S_{n_i} \ge n_{i}r+M)\\
&\le  2\epsilon' \cdot \frac{1.6}{\xi}\cdot \sum_{j=0}^{\infty}(\epsilon')^{j+1}\le \frac{3.2(\epsilon')^2}{\xi (1-\epsilon')}\le \frac{4(\epsilon')^2}{\xi}\le \epsilon,
\end{align*}
where we have used the assumptions that  $\xi\le 0.4$ and $\epsilon\le 0.4$, implying that $\epsilon'\le 1/5$.
The same holds under the opposite hypothesis.
\end{proof}

\subsection{Sequential hypothesis test without indifference region}
Suppose that, unlike previously, we do not want to use an indifference region.
This case arises when our objective is to do a test between the following two hypotheses given $r\in (0,1)$:
\begin{align*}
H_0&: \E_{\pi} f> r, \\
H_1&: \E_{\pi} f< r.
\end{align*}

In this case, we cannot use a fixed length hypothesis test. We propose the following modified version of the sequential hypothesis test of the previous section. 
Let $n_0:=\lfloor 100/\gamma \rfloor$, 
and $n_i:=\lfloor n_0 (1+\xi)^i \rfloor$ for some $\xi>0$ (we have chosen $n_0$ this way because we will need to run the chain at least this long for estimating the spectral gap). Let $\epsilon>0$ be the specified error probability of the test, and for $i\ge 1$ let 
\begin{align}\label{eq:gneps}
g(i,\epsilon):=\left(\frac{n_i}{\gamma}\cdot \left(\log(1/\epsilon)+1+2\log(i) \right)\right)^{1/2}.
\end{align}

\begin{algorithm}[h]
\caption{Sequential hypothesis test without indifference region}
\small
\label{alg:seqnoindiff}
Input: Threshold $r$, error bound $\epsilon$, testing param. $\xi$

Output: Choice of $H_0$ or $H_1$.
	\begin{algorithmic}
		\State Set $S_{n_0} := \sum_{1\le k\le n_0} f(X_k)$ and $i:=1$
		\Loop
			\State $S_{n_i} := S_{n_{i-1}} + \sum_{n_{i-1}<k\le n_i}f(X_k)$
			\If{$S_{n_i} \ge n_i r + g(i,\epsilon)$}
				\State \textbf{return} $H_0$
			\ElsIf{$S_{n_i} \le n_i r - g(i,\epsilon)$}
				\State \textbf{return} $H_1$
			\Else
				\State Set $i:=i+1$ and continue
			\EndIf
		\EndLoop	
	\end{algorithmic}
\end{algorithm}

The following proposition bounds the error probability of the test in Algorithm \ref{alg:seqnoindiff}.
\begin{proposition}[Error bound for sequential hypothesis test without indifference region]\label{propseqnoindefftest}
For the test explained in Algorithm \ref{alg:seqnoindiff}, the error probability is bounded by $\epsilon$.
\end{proposition}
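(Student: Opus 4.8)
The plan is to mimic the proof of Proposition \ref{propseqtest}, using a union bound over the testing stages $n_i$ together with the Hoeffding inequality, but now tracking the $i$-dependent threshold $g(i,\epsilon)$ instead of the constant $M$. Suppose without loss of generality that $H_1$ holds, so $\E_\pi f < r$; the event that the test incorrectly returns $H_0$ is contained in $\bigcup_{i\ge 1}\{S_{n_i}\ge n_i r + g(i,\epsilon)\}$. Since $r > \E_\pi f$ we have $S_{n_i} \ge n_i r + g(i,\epsilon) \implies S_{n_i} - n_i\E_\pi f \ge g(i,\epsilon)$, so Theorem \ref{thmhoeffdingrev} gives
\begin{equation*}
\PP\!\left(S_{n_i}\ge n_i r + g(i,\epsilon)\right) \le \exp\!\left(-\frac{\gamma\, g(i,\epsilon)^2}{n_i}\right).
\end{equation*}

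The next step is to substitute the definition \eqref{eq:gneps}: $\gamma\, g(i,\epsilon)^2/n_i = \log(1/\epsilon) + 1 + 2\log(i)$, so each term is bounded by $\exp(-\log(1/\epsilon) - 1 - 2\log i) = \frac{\epsilon}{e\, i^2}$. Summing over $i\ge 1$ and using $\sum_{i\ge 1} i^{-2} = \pi^2/6$ gives a total bound of $\frac{\epsilon}{e}\cdot\frac{\pi^2}{6} = \frac{\pi^2}{6e}\,\epsilon$. Since $\pi^2/(6e) = 1.6449\ldots/2.718\ldots < 1$, this is at most $\epsilon$, as required; the symmetric argument bounds the probability of wrongly returning $H_1$ under $H_0$. (The role of the $+1$ in the exponent of \eqref{eq:gneps} is precisely to supply the factor $e^{-1}$ that absorbs $\pi^2/6$, and the $2\log(i)$ term supplies the summable $i^{-2}$; no indifference region or assumption on $\xi$ is needed because the geometric spacing of the $n_i$ is not being exploited to make the sum converge — convergence comes for free from the $2\log(i)$ penalty.)

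I do not anticipate a serious obstacle here; the only point requiring a little care is the legitimacy of reducing the error event to the union $\bigcup_{i\ge1}\{S_{n_i}\ge n_i r + g(i,\epsilon)\}$. One should note that the test terminates at the first stage $i$ at which $|S_{n_i}-n_i r|\ge g(i,\epsilon)$, and returns $H_0$ only when $S_{n_i}\ge n_i r + g(i,\epsilon)$ at that stage; hence returning $H_0$ at all implies that some stage $i$ satisfies this one-sided inequality, which is exactly the union above. (One should also check that $g(i,\epsilon)$ is positive and that $n_0\ge 1$, which holds as long as $\gamma\le 100$, i.e.\ always, and that the while loop does not need to be shown to terminate almost surely for the error bound itself — though a.s.\ termination does follow since $g(i,\epsilon)/n_i\to 0$.) Finally, as in Proposition \ref{propseqtest}, the case $H_0$ true is handled identically by symmetry, replacing $S_{n_i}\ge n_i r+g(i,\epsilon)$ with $S_{n_i}\le n_i r - g(i,\epsilon)$ and using the lower-tail half of Theorem \ref{thmhoeffdingrev}.
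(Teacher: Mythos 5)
Your proof is correct and follows essentially the same route as the paper's: a union bound over the testing stages, the one-sided Hoeffding bound from Theorem \ref{thmhoeffdingrev}, substitution of the definition of $g(i,\epsilon)$ to get terms $\epsilon/(e\,i^2)$, and the observation that $\sum_{i\ge 1} i^{-2}=\pi^2/6<e$, with the opposite hypothesis handled by symmetry. The extra remarks on reducing the error event to the union and on a.s.\ termination are fine but not needed beyond what the paper states.
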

\begin{proof}[Proof of Proposition \ref{propseqnoindefftest}]
Suppose $H_1$ holds, implying that $\E_{\pi}f < r$.
Using Hoeffding's inequality for reversible Markov chains (Theorem \ref{thmhoeffdingrev}), the probability of choosing $H_0$ is bounded by
\begin{align*}
&\sum_{i=1}^{\infty} \PP(S_{n_{i}}\ge n_ir+g(i,\epsilon))\\
&\le \sum_{i=1}^{\infty} \PP(S_{n_i}-n_i\E_{\pi}f \ge g(i,\epsilon)) \\
&\le \sum_{i=1}^{\infty} \exp(-g(i,\epsilon)^2 \cdot \gamma/n_i)\le
\frac{\epsilon}{\exp(1)}\sum_{i=1}^{\infty} \frac{1}{i^2} \le \epsilon,
\end{align*}
where $g(i,\epsilon)$ is set according to \eqref{eq:gneps}. The same holds under the opposite hypothesis.
\end{proof}

\subsection{Expected stopping times}\label{sec:stoptime}
In this section we analyze the (expected) number of samples taken in each test, given $\epsilon,r$ and $\delta$.

The fixed length hypothesis test will always take $\log(1/\epsilon)/(\gamma \delta^2)$ steps to decide between the hypotheses in \eqref{eq:hypindiff} with error at most $\epsilon$. 
We now show the \emph{expected} stopping time of the sequential test with indifference region in Proposition \ref{thm:seqet}. For conciseness, we will use the notation
\begin{equation}\label{eqDeltadef}
\Delta:=|r-\E_{\pi}(f)|,
\end{equation}
with which $\Delta \ge \delta$ holds under either hypothesis. 

\begin{proposition}\label{thm:seqet}
	For the sequential test with indifference region, with $M$ chosen according to \eqref{eq:seqtestM} as 
 $M:= \frac{\log\left(2/\sqrt{\epsilon \xi}\right)}{2\gamma\delta}$, the expected stopping time satisfies
	\begin{equation}\label{eq:ETbound}
	\E(T)\le (1+\xi)\left(\frac{M}{\Delta} + 2\sqrt{\frac{M+2\Delta}{\gamma\Delta^3} + \frac{2}{\gamma\Delta^2}}\right)
	\end{equation}
	under both hypotheses. 
\end{proposition}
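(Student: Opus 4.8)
The plan is to study the random index $\tau$ at which Algorithm~\ref{alg:seq} returns, so that $T = n_\tau$ with $n_i = \lfloor n_0(1+\xi)^i\rfloor$, and to bound $\E(n_\tau)$ by a union bound over the testing stages combined with the Hoeffding inequality of Theorem~\ref{thmhoeffdingrev}. By the symmetry of the stopping rule under $f\mapsto 1-f$, $r\mapsto 1-r$ (which interchanges the two one-sided triggers and leaves $\Delta$ unchanged), it suffices to argue under $H_1$, so that $\E_\pi f = r-\Delta$ with $\Delta\ge\delta$. The starting observation is that $\{\tau\ge j\}$ is contained in the event that the algorithm has not returned at stage $n_{j-1}$, which under $H_1$ forces $S_{n_{j-1}} > n_{j-1}r - M$, i.e.\ $S_{n_{j-1}} - n_{j-1}\E_\pi f > n_{j-1}\Delta - M$; hence, for every $j$ with $n_{j-1}\Delta > M$,
\[
\PP(\tau\ge j)\ \le\ \exp\!\left(-\frac{\gamma(n_{j-1}\Delta-M)^2}{n_{j-1}}\right)
\]
by Theorem~\ref{thmhoeffdingrev}.

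Next I would set $j_0 := \min\{j : n_{j-1}\ge M/\Delta\}$ and use the telescoping bound $n_\tau\le n_{j_0-1} + \sum_{j\ge j_0}(n_j - n_{j-1})\mathbbm{1}[\tau\ge j]$, giving
\[
\E(T)\ \le\ n_{j_0-1} + \sum_{j\ge j_0}(n_j - n_{j-1})\,\PP(\tau\ge j).
\]
The minimality of $j_0$, the bound $n_i\le n_0(1+\xi)^i$, and $n_0\le M\min\{1/r,1/(1-r)\}\le M/\Delta$ (the last inequality holding since $\Delta\le r$ under $H_1$ and $\Delta\le 1-r$ under $H_0$) show $n_{j_0-1}\le(1+\xi)M/\Delta$ up to an additive $O(1+\xi)$ floor correction; this is the leading term $(1+\xi)M/\Delta$. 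It remains to estimate the tail sum.

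For the tail I would compare the sum $\sum_{j\ge j_0}(n_j - n_{j-1})\exp(-\gamma(n_{j-1}\Delta - M)^2/n_{j-1})$ with the integral $\int_{M/\Delta}^{\infty}\exp(-\gamma(u\Delta-M)^2/u)\,du$: since $u\mapsto(u\Delta-M)^2/u$ is increasing on $[M/\Delta,\infty)$, the integrand there is decreasing, so the sum exceeds the integral by at most one grid interval, namely $(n_{j_0}-n_{j_0-1})\PP(\tau\ge j_0)$, which is in turn absorbed by the overshoot of $n_{j_0-1}$ past $M/\Delta$ (when that overshoot is large, $\PP(\tau\ge j_0)$ is already exponentially small, so the two effects trade off). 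The substitution $u = (M+s\Delta)/\Delta$, $\beta = s\Delta/M$ turns the integral into $\tfrac{M}{\Delta}\int_0^{\infty}e^{-K\beta^2/(1+\beta)}\,d\beta$ with $K := \gamma M\Delta$; since \eqref{eq:seqtestM} and $\Delta\ge\delta$ give $K\ge\tfrac12\log(2/\sqrt{\epsilon\xi})$, a sharp estimate of $\int_0^\infty e^{-K\beta^2/(1+\beta)}\,d\beta$ (splitting into the Gaussian regime $\beta\lesssim1$ and the exponential regime $\beta\gtrsim1$ and exploiting the lower bound on $K$) yields a bound of order $\sqrt{M/(\gamma\Delta^3)}$ that is at most $2\sqrt{(M+2\Delta)/(\gamma\Delta^3)+2/(\gamma\Delta^2)}$. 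Collecting the leading term, the tail bound, and the Riemann-sum and floor corrections, and factoring out $1+\xi$, gives \eqref{eq:ETbound}.

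I expect the main difficulty to be the constant bookkeeping in the last step: making the Riemann-sum-versus-integral comparison tight enough that the extra grid interval is genuinely cancelled by the overshoot of $n_{j_0-1}$; estimating $\int_0^\infty e^{-K\beta^2/(1+\beta)}\,d\beta$ sharply enough (the crude two-regime split falls just short of the needed constant $2$ when $\xi$ is small); and tracking the floor functions in $n_i=\lfloor n_0(1+\xi)^i\rfloor$ and in $\lceil M/\Delta\rceil$, so that everything collapses into the closed form \eqref{eq:ETbound} rather than a looser expression.
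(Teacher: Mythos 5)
Your route is the same as the paper's: decompose $\E(T)\le n_1+\sum_{i\ge 2}(n_i-n_{i-1})\PP(T>n_{i-1})$, observe that survival at stage $n_{i-1}$ under $H_1$ forces $S_{n_{i-1}}>n_{i-1}r-M$, apply Theorem \ref{thmhoeffdingrev} to get the factor $\exp\bigl(-\gamma(n_{i-1}\Delta-M)_+^2/n_{i-1}\bigr)$, compare the resulting series with the integral of $\exp\bigl(-\gamma(t\Delta-M)_+^2/t\bigr)$, and close with an explicit bound on that integral; your observations that $n_0\le M\min\{1/r,1/(1-r)\}\le M/\Delta$ and that the two hypotheses are symmetric are also what the paper uses (mostly implicitly). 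Your deferral of the final integral estimate is at the same level of detail as the paper, which only invokes ``an upper bound for the exponential integral.''

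The step whose justification fails as stated is the Riemann-sum/integral comparison. For the geometrically growing grid $n_i=\lfloor n_0(1+\xi)^i\rfloor$ it is \emph{not} true that the left Riemann sum of a decreasing integrand exceeds the integral by at most the first grid cell: the excess is of the form $\sum_j(n_j-n_{j-1})\bigl(g(n_{j-1})-g(n_j)\bigr)$, which does not telescope when the cell widths grow; if $g$ stays near $1$ over many cells and then drops, the excess is the width of the cell where the drop occurs, of order $\xi n_j$, which can dwarf $(n_{j_0}-n_{j_0-1})\PP(\tau\ge j_0)$. This is exactly where the factor $(1+\xi)$ in \eqref{eq:ETbound} must be earned rather than ``factored out'' at the end: the correct comparison (the one the paper uses) is termwise, $(n_j-n_{j-1})\,g(n_{j-1})\le\frac{n_j-n_{j-1}}{n_{j-1}-n_{j-2}}\int_{n_{j-2}}^{n_{j-1}}g(t)\,\mathrm{d}t\le(1+\xi)\int_{n_{j-2}}^{n_{j-1}}g(t)\,\mathrm{d}t$ (up to floor effects), so the tail sum is bounded by $(1+\xi)$ times the integral, not by the integral plus one cell. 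Once you make this replacement — and keep the positive part $(t\Delta-M)_+$ so that the flat piece of the integrand up to $M/\Delta$ merges with the leading term $(1+\xi)M/\Delta$, which also removes the need for the hand-wavy ``absorbed by the overshoot'' trade-off — your bookkeeping lands exactly on \eqref{eq:ETbound}, since the $(1+\xi)$ already present in the target pays for the shifted-cell comparison.
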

\begin{proof}
Under hypothesis $H_1$, by the definition of the test, using the Hoeffding inequality we can see that
\begin{align*}
&\E(T)\le n_1+\sum_{i=2}^{\infty}(n_i-n_{i-1})\PP(T>n_{i-1})\\
&\le n_1+\sum_{i=2}^{\infty}(n_i-n_{i-1})\PP(S_{n_{i-1}}>n_{i-1} r - M)\\
& =n_1+\sum_{i=2}^{\infty}(n_i-n_{i-1})\PP(S_{n_{i-1}}-n_{i-1}\E_{\pi} f> n_{i-1} \Delta - M)\\
&\le n_1+\sum_{i=2}^{\infty}(n_i-n_{i-1})\exp\left(-\frac{\gamma (n_{i-1} \Delta - M)_{+}^2}{n_{i-1}}\right)\\
&\le n_1+(1+\xi) \int_{t=n_1}^{\infty}\exp\left(-\frac{\gamma (t \Delta - M)_{+}^2}{t}\right)\mathrm{d}t\\
&\le (1+\xi)\left(\frac{M}{\Delta}+2\sqrt{\frac{M+2\Delta}{\gamma\Delta^3} + \frac{2}{\gamma\Delta^2}} \right),
\end{align*}
where the last step follows from an upper bound for the exponential integral, and $x_+$ denotes the positive part of $x\in\R$.%
\end{proof}
The expected stopping time of the test grows essentially linearly in $M$. One can then show that under any of the two hypotheses, as $\epsilon\to 0$, 
\[\E (T)\le (1+\xi)\frac{\log\left(2/\sqrt{\epsilon \xi}\right)}{2\gamma\delta\Delta}+\mathcal{O}(\sqrt{M/\Delta}),\] 
which can be much smaller than $\log(1/\epsilon)/(\gamma\delta^2)$, the number of steps for the fixed length test, if $\Delta$ is much larger than $\delta$.  
Optimization of the above  expected stopping time bound in $\xi$ yields that the choice 
\begin{equation}\label{eq:xichoice}
\xi:=\frac{1}{\log(2)\log(1/\epsilon)}
\end{equation}
is reasonable, which gives $\xi\approx 0.3$ for $\epsilon=0.01$.

Simple arguments show that the test will stop in a finite amount of time almost surely, even if none of the two hypotheses is satisfied, that is, if $\E_{\pi}f \in (r-\delta,r+\delta)$. In practice, however, one may need to stop the run after a certain number of steps.
Note that it is easy to show that for any $t\ge 0$, $T$ satisfies the inequality
	\begin{equation}\label{eq:Tconceq}
	\PP\left(T\ge t\right) \le (1+\xi)\exp\left(-\frac{\gamma (t \Delta - M)_{+}^2}{t}\right).
	\end{equation}
Using this inequality, the probability that the chain runs for more than $6M/\delta$ steps is less than
\begin{equation}\label{eqstopseq}
(1+\xi)\exp\left(-\gamma \delta\cdot 4M\right)\le \frac{(1+\xi)\epsilon \xi}{4},
\end{equation}
under both hypotheses, which is quite small. Therefore, if this happens, we will choose $H_0$  if $S_{6M/\delta} \ge (6M/\delta)r$ and $H_1$ otherwise. This modification of the original test only changes the error at most by the amount \eqref{eqstopseq}. 

Now we turn to the sequential hypothesis test without indifference region. The following proposition bounds the expected stopping time of the test.

\begin{proposition}
	For the sequential test without indifference region, the expected stopping time satisfies
	\begin{equation}\label{eq:ETbound2}
	\E(T)\le (1+\xi)\left(N + \frac{4\epsilon}{\gamma \Delta^2}\right),
	\end{equation}
	under both hypotheses, where $\Delta$ is defined as in \eqref{eqDeltadef}, and $N$ is defined as
	\begin{align*}
	&N:=\inf\Big\{ n_i, i\ge 1: \frac{4 \left(\log(1/\epsilon)+1+2\log(i)\right)}{\gamma \Delta^2}\le n_i\Big\}.
	\end{align*}
\end{proposition}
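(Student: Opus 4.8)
The plan is to argue exactly as in the proof of Proposition~\ref{thm:seqet}, the only difference being that the constant margin $M$ there is now replaced by the stage-dependent margin $g(i,\epsilon)$. Assume without loss of generality that $H_1$ holds, so that $\Delta = r-\E_\pi f>0$; the case of $H_0$ is symmetric. Since the stopping time $T$ takes values only in $\{n_1,n_2,\dots\}$, one has
\begin{equation*}
\E(T)\le n_1+\sum_{i=2}^{\infty}(n_i-n_{i-1})\,\PP(T>n_{i-1}).
\end{equation*}
If the test has not stopped by stage $i-1$, then in particular it did not return $H_1$ there, so $S_{n_{i-1}}>n_{i-1}r-g(i-1,\epsilon)$ and hence $S_{n_{i-1}}-n_{i-1}\E_\pi f>n_{i-1}\Delta-g(i-1,\epsilon)$; Theorem~\ref{thmhoeffdingrev} then gives $\PP(T>n_{i-1})\le\exp\!\big(-\gamma(n_{i-1}\Delta-g(i-1,\epsilon))_+^2/n_{i-1}\big)$, a quantity that is trivially at most $1$.

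The next step is to split the sum at $N$. Let $i^\ast$ be the least index with $n_{i^\ast}\ge N$, so that $N=n_{i^\ast}$. For the indices $2\le i\le i^\ast$ (those with $n_{i-1}<N$) I would bound the probability by $1$; the corresponding contribution telescopes to $n_1+\sum_{i=2}^{i^\ast}(n_i-n_{i-1})=n_{i^\ast}\le(1+\xi)N$, using $n_{i^\ast}\le(1+\xi)(n_{i^\ast-1}+1)$ together with $n_{i^\ast-1}<N$. For the remaining indices, $n_{i-1}\ge N$, the point is that $N$ is defined precisely so that $n_{i-1}\ge N$ forces $4\big(\log(1/\epsilon)+1+2\log(i-1)\big)/(\gamma\Delta^2)\le n_{i-1}$, which rearranges to $g(i-1,\epsilon)\le\tfrac12 n_{i-1}\Delta$; consequently $(n_{i-1}\Delta-g(i-1,\epsilon))_+^2/n_{i-1}\ge\tfrac14 n_{i-1}\Delta^2$ and therefore $\PP(T>n_{i-1})\le\exp(-\gamma n_{i-1}\Delta^2/4)$.

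It then remains to bound the tail $\sum_{i>i^\ast}(n_i-n_{i-1})\exp(-\gamma n_{i-1}\Delta^2/4)$. I would do this by a Riemann-sum comparison against $\int_{N}^{\infty}\exp(-\gamma t\Delta^2/4)\,\mathrm{d}t=\tfrac{4}{\gamma\Delta^2}\exp(-\gamma N\Delta^2/4)$: since $t\mapsto\exp(-\gamma t\Delta^2/4)$ is decreasing and consecutive gaps $n_i-n_{i-1}$ grow by at most a factor $1+\xi$ per step, the sum is at most $1+\xi$ times this integral, which is at most $(1+\xi)\tfrac{4\epsilon}{\gamma\Delta^2}$ because $N\ge 4\log(1/\epsilon)/(\gamma\Delta^2)$. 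Adding the head contribution $(1+\xi)N$ yields $\E(T)\le(1+\xi)\big(N+\tfrac{4\epsilon}{\gamma\Delta^2}\big)$. The step I expect to be the main obstacle is this last one: passing from the discrete sum to the integral while keeping the prefactor at $1+\xi$ and correctly absorbing the floor operations in $n_i=\lfloor n_0(1+\xi)^i\rfloor$, and, relatedly, checking that the inequality defining $N$ continues to hold for every index past $i^\ast$ — which for the smallest admissible $i^\ast$ uses $n_0=\lfloor100/\gamma\rfloor\ge100$.
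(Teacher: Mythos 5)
Your proposal follows essentially the same route as the paper's proof: the same decomposition $\E(T)\le n_1+\sum_i (n_i-n_{i-1})\PP(T>n_{i-1})$, the Hoeffding bound $\exp\bigl(-\gamma(n_{i-1}\Delta-g(i-1,\epsilon))_+^2/n_{i-1}\bigr)$, the split at $N$ using $g(i,\epsilon)\le n_i\Delta/2$ once $n_i\ge N$, and a tail estimate of size $(1+\xi)\tfrac{4\epsilon}{\gamma\Delta^2}$. The only (cosmetic) difference is that you bound the tail by the integral $\int_N^\infty e^{-\gamma t\Delta^2/4}\,\mathrm{d}t$ while the paper sums a geometric series over integers and uses $e^{-x}/(1-e^{-x})\le 1/x$, which gives the identical quantity, so your argument is correct and equivalent.
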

\begin{proof}
The probability that the test takes at least $n_i$ steps can be bounded as
\begin{align*}\label{eq:Tgn}
&\PP(T\ge n_i)\le \PP(|S_{n_i} - n_ir| \le g(i,\epsilon)) \\
&=\PP(-g(i,\epsilon) \le S_{n_i} - n_ir \le g(i,\epsilon))\\
&\begin{cases}
\le \PP(S_{n_i}-n_i\E_{\pi}f \ge n_i(r-\E_{\pi}f)-g(i,\epsilon))  \text{ if } \E_{\pi}f < r\\
\le \PP(S_{n_i}-n_i\E_{\pi}f \le n_i(r-\E_{\pi}f)+g(i,\epsilon))  \text{ if } \E_{\pi}f > r,
\end{cases}
\end{align*}
and by applying the Hoeffding inequality, this can be further bounded by
\begin{equation}
\PP(T\ge n_i) \le\exp\left( -\frac{\gamma}{n_i} \left((n_i \Delta -g(i,\epsilon)\right)_+)^2\right).
\end{equation} 
By the definition of $N$, it is easy to see that for $n_i\ge N$, we have $g(i,\epsilon)\le n_i\Delta/2$. Using this, and the fact that 
\[\E(T)=n_1+\sum_{i=1}^{\infty}(n_{i+1}-n_i)\PP(T> n_i),\] 
we can show that
\begin{align*}\E(T)&\le 
(1+\xi)\left(N+\sum_{k=N+1}^{\infty}\exp\left( -\frac{\gamma}{k} \left(k \Delta/2\right)^2\right)\right)\\
&\le  (1+\xi)\left(N+ \frac{\exp\left( -\gamma(N+1)\Delta^2/4\right)}{1-\exp\left( -\gamma\Delta^2/4\right)}\right)\\
&\le (1+\xi)\left(N+  \frac{\epsilon\cdot \exp\left( -\gamma\Delta^2/4\right)}{1-\exp\left( -\gamma\Delta^2/4\right)}\right)\\
&\le (1+\xi)\left(N+\frac{4\epsilon}{\gamma\Delta^2}\right),\end{align*}
using the definition of $N$ and the fact that for $x\ge 0$, $\frac{\exp(-x)}{1-\exp(-x)}\le \frac{1}{x}$.
\end{proof} 
Our bound on the expected stopping time of the test without indifference region is similar to  sequential test with indifference region. As $\epsilon\to 0$, and $\xi\to 0$, the bounds on the expected running time of the two tests with/without indifference region are approximately 
\[\frac{\log(1/(\epsilon\xi))}{4\gamma \delta \Delta}(1+o(1)) \text{ and }\frac{4\log(1/(\epsilon\xi))}{\gamma \Delta^2}(1+o(1)),\] respectively. This means that the second sequential test performs better when $\Delta$ is considerably bigger than $\delta$, but when they are close, the first test performs better. Finally, the second test does not assume the existence of an indifference region, therefore it is more generally applicable.

\makeatletter{}\section{Case study}\label{sec:results}
Here we present a case study to evaluate various aspects of our hypothesis tests empirically. 
The case study is from the domain of systems biology. Dynamical system models based on ordinary differential equations (ODEs) are often used to describe the concentration of molecular species such as proteins inside the cell~\citep{klipp2005systems}. However, the rate constants associated with biological processes are often unknown and not directly measurable. This appears as a set of unknown parameters in the ODE model whose value can only be inferred given limited and noisy experimental data. In a Bayesian inference setting, sampling directly from the posterior distribution of parameters will not be possible, and a Markov chain Monte Carlo method can be used to collect samples from the posterior \citep{learningsysbbook}.

We assume a dynamical system model of the form
\begin{align}
\dot{\mathbf{x}}(t) &= F(\mathbf{x}(t),\tht)\\
\mathbf{y}(t) &= G(\mathbf{x}(t),\tht) + \omg(t). \nonumber
\end{align}
Here $\mathbf{x} \in \R^{d_x}$ is a vector of state variables, $\mathbf{y} \in \R^{d_y}$ is a vector of observables, $\tht \in \R^{d_{\theta}}$ is a vector of parameters, and $\omg(t) \in \R^{d_y}$ is a random variable representing measurement noise. 

The goal is to construct the posterior distribution of $\tht$ given its prior $p_0(\tht)$, and a set of observations $Y$. Here $Y$ consists of observations of the form $Y_{i,j}$,  where $1 \le i \le d_y$ denotes the $i$th observable of the system at time point $t_j$, $1 \le j \le T $. We denote the likelihood of the set of observations with respect to $\tht$ as $p(Y|\tht)$. The posterior, denoted $\pi(\tht|Y)$, is then expressed in the standard way as
\begin{equation}
\pi(\tht|Y) = \frac{p_0(\tht)p(Y|\tht)}{\int p_0(\tht)p(Y|\tht) d\tht} \propto p_0(\tht)p(Y|\tht).
\end{equation}
In this case study we use a uniform prior over a bounded set of parameters, and Gaussian likelihood (corresponding to uncorrelated multivariate Gaussian $\omg(t)$) defined as
\begin{align}\label{eq:guasslh}
p(Y|\tht) =& \prod_{i=1}^{d_y}\prod_{j=1}^{T} P(Y_{i,j}|\tht) \\
\propto & \exp \left(-\sum_{i=1}^{d_y}\sum_{j=1}^{T} \left(\frac{Y_{i,j}-y_{i}(t_j)|_{\tht}}{\sqrt{2}\sigma_{i,j}}\right)^2\right). \nonumber
\end{align}
Here $y_{i}(t_j)|_{\tht}$ denotes the $i$th observable of the model at time $t_j$, when simulated with parameters $\tht$, and $\sigma_{i,j}$ is the standard deviation associated with data point $Y_{i,j}$.

The analysis of such a system involves estimating the expected value of a function $f$ of $\tht$ with respect to $\pi(\tht|Y)$, denoted $\E_{\pi}f$.
Closed form solutions to this problem will, in general, not be available, and sampling independently from the posterior $\pi(\tht|Y)$ will not be possible. We therefore resort to using a Metropolis-Hastings chain to collect a sequence of parameters $\tht_1$, $\ldots$, $\tht_n$, and use the approximation
\begin{equation}
\E_{\pi}f \approx \frac{1}{n}\sum_{i=1}^nf(\tht_i).
\end{equation}
Each state of the chain is obtained by a proposal and an acceptance step. We use a symmetric Gaussian proposal of the form
$q(\tht_i \to \tht') = \mathcal{N}(\tht_i,\Sigma_{\mathrm{MH}})$, a $d_{\theta}$-dimensional multivariate Gaussian with mean identical to the current parameter vector, and covariance matrix $\Sigma_{\mathrm{MH}}$. Here $\Sigma_{\mathrm{MH}}$ is diagonal with entries $\sigma_{\mathrm{MH},1}^2$, $\ldots$, $\sigma_{\mathrm{MH},d_{\theta}}^2$, representing variances along each dimension independently. 
The proposed parameter $\tht'$ is accepted with probability $\alpha(\tht_i \to \tht')$, determined by the posterior as
\begin{align}
\alpha(\tht_i \to \tht') &= \min \left (1,\frac{\pi(\tht'|Y)}{\pi(\tht_i|Y)} \right) = \\
&=\min\left(1, \frac{p_0(\tht')p(Y|\tht')}{p_0(\tht_i)p(Y|\tht_i)} \right ). \nonumber
\end{align}
Note that the proposal does not appear in the acceptance ratio due to symmetry.

We apply our method to analyze the dynamics of the JAK-STAT biochemical pathway (for more details, see \cite{swameye2003}). 
Our goal will be to decide about an important property of the system, namely, whether the concentration of nuclear STAT protein reaches the threshold of $1$. 

The variables in the model represent the quantity of different forms of the STAT protein in a biological cell. These quantities cannot be measured directly, however, experimental data for two indirect quantities (total phosphorylated STAT, and total STAT in cytoplasm) has been published in \cite{swameye2003}. There are $4$ model parameters $\tht=(k_1,k_2,k_3,k_4)$ corresponding to the kinetic rate constants of biochemical reactions, whose values are unknown. The equations governing the system dynamics and the parameters used to run the MCMC chain are given in the Appendix. 

We let 
\begin{equation*}
f(\tht) := \begin{cases} 0 & \text{if nuclear STAT does not reach 1} \\ 
& \text{when simulating with } \tht, \\
1 & \text{if nuclear STAT reaches 1} \\
& \text{when simulating with } \tht.
\end{cases}
\end{equation*}

We ran $m=1000$ independent instances of the MCMC chain for a total of $2\cdot 10^6$ steps each (with an additional $t_0=5\cdot 10^4$ burn-in steps). The method described in the Appendix was used to estimate the value of the spectral gap, independently for each chain. We used the output of the $m$ independent chains as a basis for constructing empirical results in Figure \ref{fig:results_psi1}(a-f).

To get a reliable estimate of the true expected value, $\widehat{E}\approx\E_{\pi} f$, we took the overall average of the estimates from all $m$ chains, and treated the obtained value $\widehat{E} = 0.8875$ as the reference for $\E_{\pi} f$. 

We first examined the empirical error rate of the fixed sample size hypothesis test. For a fixed sample size $n$, we define the empirical error rate $E_n$ as the ratio of chains choosing $H_0$ if $H_1$ holds (or the ratio choosing $H_1$ if $H_0$ holds). If neither $H_0$ nor $H_1$ holds (when $r-\delta < \E_{\pi}f < r + \delta$), then $E_n := 0$.
We set $r=\widehat{E}-\delta$ and calculated $E_n$ for a range of sample sizes up to $n=10^6$. For the same set of sample sizes, we calculated the average error rate bound derived from equation \eqref{eq:nbound} as $\epsilon_n = \exp(-n\gamma\delta)$ (the average is used as the estimate of $\gamma$ is different across independent runs). Figure \ref{fig:results_psi1}(a) shows $E_n$ and $\epsilon_n$ as a function of $n$ for different values of $\delta$. Importantly, Figure \ref{fig:results_psi1}(a) demonstrates that the theoretical bounds on the error rate are reliable in practice, since the empirical error rate is consistently below this upper bound ($E_n \le \epsilon_n$ for all examined $n,\delta$).

We next look at results for sequential hypothesis testing with indifference region. In all cases we use $\xi:=0.3$ to set the set the sample sizes at which a test is performed (see \eqref{eq:xichoice}). We refer to the number of samples collected in the Markov chain before a decision is made as the \emph{stopping time} (see also Section \ref{sec:stoptime}).
Figures \ref{fig:results_psi1}(b-c) show the mean empirical stopping times for a range of $r$ values for different values of $\delta$ (b), and different values of $\epsilon$ (c). For values of $r$ close to $\E_{\pi}f$, some chains did not stop within $2\cdot 10^6$ samples, and the corresponding mean values are therefore not determined. In both plots, the average of the fixed sample sizes needed for each chain is also shown. 
In Figure \ref{fig:results_psi1} (d), the empirical cumulative distribution of stopping times is shown for the hypothesis test for a set of $r$ values in $(0,1)$. Here the value of $\delta=0.05$ and $\epsilon=0.01$ is fixed. As a reference, we also show the empirical distribution of sample sizes needed to perform the fixed length test. These differ across chains due to the different estimates of the spectral gap. 
The plot shows that for values of $r$ distant from the true average, sequential sampling consistently terminates with small variability at low sample sizes. When $r$ is close to the true average, the stopping times show higher variability. 

Finally, we used the sequential hypothesis test without indifference region. Again, we set $\xi:=0.3$ for all experiments. Figure \ref{fig:results_psi1}(e) shows empirical stopping times for different values of $\epsilon$ and a range of $r$ values. Figure \ref{fig:results_psi1} (f) shows the empirical cumulative distribution of stopping times for the hypothesis test for a set of $r$ values. Here the value of $\epsilon=0.01$ is fixed. 

We evaluated the empirical error rate in the sequential hypothesis test with indifference region. We found that out of all $1000$ runs, under all examined choices of $r,\epsilon,\delta$,  the worst empirical error rate was $3\cdot 10^{-3}$, which was below all choices of $\epsilon$. This shows that the specified error bound of Proposition \ref{propseqtest} was indeed met. It also suggests that the bound might not be sharp and $M$ could be chosen even smaller than described by \eqref{eq:seqtestM}, resulting in earlier stopping. Similarly, for the test without indifference region, the empirical error rate was at most $10^{-3}$ for any of the runs under examined choices of $r$ and $\epsilon$. 

\begin{figure*}[ht]
\centering
	\begin{tabular}{ccr}
	\addtolength{\subfigcapskip}{0.2cm}
	
	\subfigure[Empirical error rates for the fixed sample size test for a range of sample sizes. Dashed lines show the theoretical upper bounds derived from \eqref{eq:nbound}. Here $r = \widehat{E} - \delta$ and $\epsilon=0.01$ are fixed, and $3$ distinct $\delta$ values are shown.]{
		\includegraphics[width=0.4\textwidth]{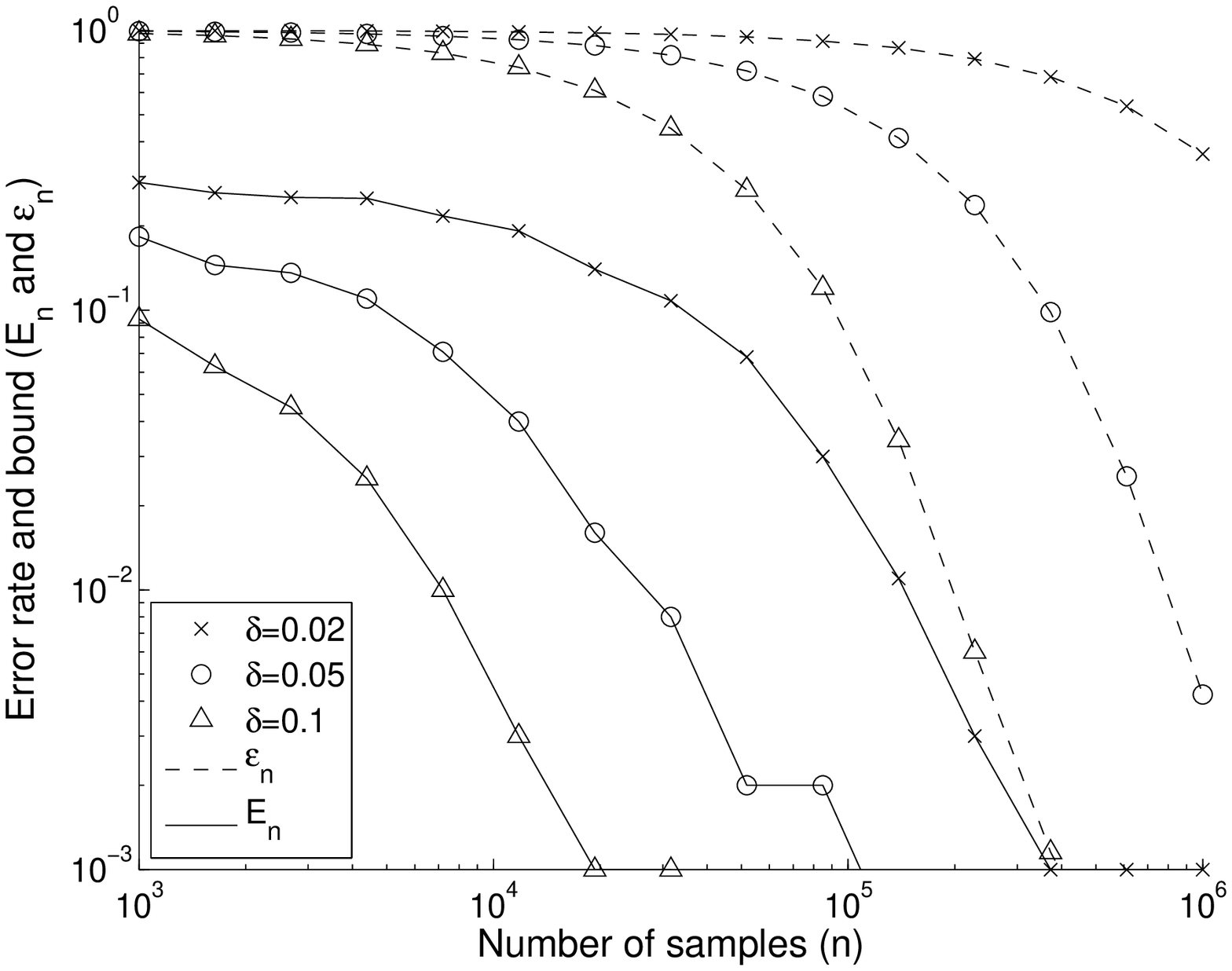}
		}
	
	&
	\addtolength{\subfigcapskip}{0.2cm}
	\subfigure[Average empirical stopping times for sequential hypothesis test with indifference region for different values of $\delta$, with $\epsilon=0.01$. Dashed lines show average sample sizes required for the fixed sample size test.]{
			\includegraphics[width=0.4\textwidth]{figure_seqindetdelta.eps}
			}
	&\\
	\addtolength{\subfigcapskip}{0.2cm}
	\subfigure[Average empirical stopping times for sequential hypothesis test with indifference region for different values of $\epsilon$, with $\delta=0.05$. Dashed lines show average sample sizes required for the fixed sample size test.]{
		\includegraphics[width=0.4\textwidth]{figure_seqindeteps.eps}
		}
	
	&
		\addtolength{\subfigcapskip}{0.2cm}
		\subfigure[Empirical distribution of stopping times with sequential hypothesis test with indifference region for different values of $r$. Here $\delta=0.05$ and $\epsilon=0.01$ is used.]{
			\includegraphics[width=0.4\textwidth]{figure_seqinddist.eps}
			}	
	&\\
	\addtolength{\subfigcapskip}{0.2cm}
		\subfigure[Average empirical stopping times for sequential hypothesis test without indifference region for different values of $\epsilon$.]{
		\includegraphics[width=0.4\textwidth]{figure_seqnoindeteps.eps}
		}	
	&
	\addtolength{\subfigcapskip}{0.2cm}
			\subfigure[Empirical distribution of stopping times with sequential hypothesis test without indifference region for different values of $r$. Here $\epsilon=0.01$ is used.]{
			\includegraphics[width=0.4\textwidth]{figure_seqnoinddist.eps}
			}	
		
	\end{tabular}
	\caption{Empirical results of hypothesis test error rates and stopping times. }
	\label{fig:results_psi1}
\end{figure*}

\makeatletter{}\section*{Conclusion}
In this paper we proposed hypothesis tests on MCMC estimates. These tests are useful in cases where one is not interested in the exact value of the estimate, but only whether it is below or above a certain threshold. We have stated three different hypothesis tests, a fixed length test, a sequential test with indifference region, and a sequential test without indifference region. Our main theoretical contribution is rigorous error bounds for these test, and on their expected running times.
We have illustrated their usage on a case study from the domain of systems biology, using an ODE model of the JAK-STAT biochemical pathway. In our simulations, the sequential tests have performed well, and their running time can be much shorter than the fixed length test, especially when the true expected value is far away from the threshold.

There are some theoretical and practical questions that remain open for further exploration. We have chosen the details of Algorithms \ref{alg:seq} and \ref{alg:seqnoindiff}, in particular, the function $g(i,\epsilon)$, 
in this specific way in order to be able to show error bounds using concentration inequalities. However, we do not claim that these choices are optimal, and our error estimates could possibly be sharpened using different techniques (in the independent case, there is a well developed literature on how to do sequential tests with or without indifference regions in an optimal way, see \cite{laitzeleung1}, \cite{laitzeleung2}).

\section{Acknowledgements}
DP was supported by an MOE Singapore Academic Research Fund Tier 2 Grant ``Approximate Computational Methods for High-Dimensional Systems''.

\bibliographystyle{spbasic}

\clearpage
\makeatletter{}\section{Appendix}
\subsection{Estimating the spectral gap}
In this section, we propose a procedure to estimate the spectral gap of Markov chains. The procedure is motivated by the following lemma. In the statement of the lemma, we are going to use the scalar product \[\left<f,g\right>_{\pi}:=\sum_{x\in \Omega} f(x)g(x)\pi(x),\]
where $\pi$ is the stationary distribution of a finite state Markov chain.
\begin{lemma}\label{rholimlemma}
Suppose that $X_1,X_2,\ldots$ is a finite state Mar\-kov chain with spectral gap $\gamma$, and transition kernel $P$. Suppose that $f:\Omega\to \R$ satisfies that $\E_{\pi}(f)=0$, and that $f$ is not orthogonal to the eigenspace corresponding to the eigenvector of $P$ of the second largest absolute value (with respect to the scalar product $\left<\cdot ,\cdot \right>_{\pi}$). Let
$\rho_\eta(f):=\E_{\pi}(f(X_0)f(X_\eta))$, then
\begin{equation}\lim_{\eta\to \infty}(|\rho_{\eta}(f)|/\Var_{\pi}(f))^{1/\eta}=1-\gamma^*.\end{equation}
\end{lemma}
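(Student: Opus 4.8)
The plan is to diagonalize $P$ on $L_2(\pi)$ and read off the asymptotics of $\rho_\eta(f)$ term by term. Since the chain is finite-state and reversible, $\mathbf{P}$ is self-adjoint on the finite-dimensional Hilbert space $L_2(\pi)$, so there is an orthonormal basis of real eigenfunctions $\phi_1 \equiv 1, \phi_2, \ldots, \phi_N$ with real eigenvalues $1 = \lambda_1 > \lambda_2 \ge \cdots \ge \lambda_N \ge -1$ (eigenvalue $1$ has multiplicity one because $\gamma > 0$; if $\gamma = 0$ there is nothing to prove, or the statement is understood under $\gamma^* > 0$). Write $f = \sum_{j\ge 2} c_j \phi_j$ (there is no $\phi_1$ component since $\E_\pi f = 0$, i.e.\ $\langle f, \phi_1\rangle_\pi = 0$). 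The first key step is the identity
\begin{equation*}
\rho_\eta(f) = \E_\pi\big(f(X_0) f(X_\eta)\big) = \langle f, \mathbf{P}^\eta f\rangle_\pi = \sum_{j\ge 2} c_j^2 \lambda_j^\eta,
\end{equation*}
which follows because $\E(f(X_\eta)\mid X_0 = x) = (\mathbf{P}^\eta f)(x)$ under stationarity, together with $\mathbf{P}^\eta \phi_j = \lambda_j^\eta \phi_j$ and orthonormality. Also $\Var_\pi(f) = \langle f,f\rangle_\pi = \sum_{j\ge 2} c_j^2$.

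The second step is the asymptotic extraction. Let $\lambda^* := \max\{|\lambda_j| : j \ge 2\} = 1 - \gamma^*$, and let $J := \{ j \ge 2 : |\lambda_j| = \lambda^*\}$ be the set of indices achieving the second-largest absolute value; by hypothesis $f$ is not orthogonal to the corresponding eigenspace, so $\sum_{j\in J} c_j^2 > 0$. One must be slightly careful because $J$ may contain both $+\lambda^*$ and $-\lambda^*$, so $\rho_\eta(f)$ need not have constant sign; this is precisely why the statement uses $|\rho_\eta(f)|$. Factoring out $(\lambda^*)^\eta$,
\begin{equation*}
\frac{|\rho_\eta(f)|}{\Var_\pi(f)} = (\lambda^*)^\eta \cdot \frac{\big| \sum_{j\in J} c_j^2 \,\mathrm{sgn}(\lambda_j)^\eta + \sum_{j\notin J,\, j\ge 2} c_j^2 (\lambda_j/\lambda^*)^\eta \big|}{\sum_{j\ge 2} c_j^2},
\end{equation*}
and the tail sum over $j\notin J$ tends to $0$ since $|\lambda_j/\lambda^*| < 1$ there. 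The remaining issue is that $A_\eta := \sum_{j\in J} c_j^2 \,\mathrm{sgn}(\lambda_j)^\eta$ equals $\sum_{j\in J} c_j^2 > 0$ when $\eta$ is even, and $\sum_{\lambda_j = \lambda^*} c_j^2 - \sum_{\lambda_j = -\lambda^*} c_j^2$ when $\eta$ is odd, which could conceivably vanish along odd $\eta$. To handle this cleanly I would take the limit along even $\eta$ (or argue that $\liminf |A_\eta| > 0$ unless all the relevant mass sits symmetrically, and in the borderline case pass to the subsequence of even indices): then $|A_\eta|$ is bounded away from $0$ and above, so $\big(|A_\eta|/\Var_\pi(f)\big)^{1/\eta} \to 1$, and hence $\big(|\rho_\eta(f)|/\Var_\pi(f)\big)^{1/\eta} \to \lambda^* = 1 - \gamma^*$.

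The main obstacle is genuinely the sign-oscillation bookkeeping just described: along odd $\eta$ the dominant coefficient can degenerate, so the limit should be understood (or proved) along the natural subsequence, and the "additional technical assumption" in the Fact / the non-orthogonality hypothesis in the Lemma is exactly what rules out the worst degeneracies. Everything else — self-adjointness of $\mathbf{P}$, the spectral decomposition, the identity $\rho_\eta(f) = \langle f, \mathbf{P}^\eta f\rangle_\pi$, and the elementary fact that $(c\,\mu^\eta)^{1/\eta}\to\mu$ for $c,\mu>0$ — is routine. I would also remark that in typical applications one modifies the chain to be lazy so that all $\lambda_j \ge 0$, in which case $J$ contains only $+\lambda^*$, $\rho_\eta(f) > 0$ for all large $\eta$, and the absolute values and even-subsequence caveat disappear entirely.
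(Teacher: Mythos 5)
Your proposal follows essentially the same route as the paper's own proof: the spectral decomposition of the self-adjoint operator $\mathbf{P}$ on $L_2(\pi)$, the identity $\rho_\eta(f)=\langle f,\mathbf{P}^\eta f\rangle_\pi=\sum_{j\ge 2}c_j^2\lambda_j^\eta$ with $\Var_\pi(f)=\sum_{j\ge 2}c_j^2$, and extraction of the dominant eigenvalue to get $1-\gamma^*$. The only difference is that you explicitly treat the possible cancellation between the $+\lambda^*$ and $-\lambda^*$ eigenspaces along odd $\eta$ (passing to even $\eta$, or to lazy chains), a degeneracy the paper's final step glosses over; this is a legitimate refinement of the same argument rather than a different approach.
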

\begin{remark}
A similar result can be shown for chains with general state spaces, but for notational simplicity we only consider finite state spaces. In practice, the condition of non-orthogonality is almost always satisfied.
\end{remark}
\begin{proof}
We can write the eigenvalues of the operator $P$ as $1=\lambda_1\ge \lambda_2\ge \ldots \ge \lambda_n$, with $n=|\Omega|$, and corresponding right eigenvectors $f_1,\ldots,f_n$, which are orthonormal with respect to the scalar product $\left<\cdot,\cdot\right>_{\pi}$.
It is clear that $f_1=1$, and using the assumption $\E_{\pi}(f)=0$, $f$ can be decomposed as
\[f=\sum_{i=2}^{n}c_i\cdot f_i,\]
for some constants $\{c_i\}_{i=1}^{n}$. Using this decomposition, we have
\begin{align*}
\frac{\E_{\pi}(f(X_0) f(X_{\eta}))}{\Var_{\pi}(f)}=\frac{\left<f,P^{\eta} f\right>_{\pi}}{\left<f,f\right>_{\pi}}=\frac{\sum_{i=2}^{n}c_i^2 \lambda_i^{\eta}}{\sum_{i=2}^{n}c_i^2},
\end{align*}
and using the condition of non-orthogonality, we have
\begin{align*}\lim_{\eta\to\infty}\left|\frac{\sum_{i=2}^{n}c_i^2 \lambda_i^{\eta}}{\sum_{i=2}^{n}c_i^2}\right|^{1/\eta}=\max_{2\le i\le n}|\lambda_i|=1-\gamma^*. \quad \qedhere
\end{align*}
\end{proof}

In practice, we cannot choose $\eta$ to be infinity, in particular, there are some issues related to the variance of the estimator of $\rho_{\eta}(f)$ that need to be taken in to account.
Given burn-in time $t_0$, and $f(X_{t_{0}+1})$, $\ldots, f(X_{t_0+n})$, we define the empirical mean as  \[\ol{f}=\frac{f(X_{t_{0}+1})+\ldots+ f(X_{t_0+n})}{n},\]
and use the estimator 
\[\hat{\rho}_{\eta}(f):=\frac{\sum_{j=t_0+1}^{n+t_0-\eta}(f(X_j)-\ol{f})(f(X_{j+\eta})-\ol{f})}{n-\eta}.\]
The typical range of  dependence of the elements of the Markov chain is of the order of $1/\gamma^*$, so we expect the standard deviation of $\hat{\rho}_{\eta}(f)/\Var_{\pi}(f)$ to be of the order of 
$\frac{1}{\sqrt{n \gamma^*}}$. Now if we want to use the estimator 
\[1-\gamma^*\approx (\rho_{\eta}(f)/\Var_{\pi}(f))^{1/\eta},\]
the standard deviation of $\rho_{\eta}(f)/\Var_{\pi}(f)$ needs to be much smaller than $(1-\gamma^*)^{\eta}$.

Solving the equation $\frac{1}{\sqrt{n \gamma^*}}=(1-\gamma^*)^{\eta}$ leads to $\eta=\frac{\log(n\gamma^*)}{2\log(1/(1-\gamma^*))}$, so we propose the choice 
\begin{equation}\eta=\frac{\log(n\gamma^*)}{4\log(1/(1-\gamma^*))},\end{equation}
for which the standard deviation of $\rho_{\eta}(f)/\Var_{\pi}(f)$ is much smaller than $(1-\gamma^*)^{\eta}$. A slight inconvenience of this choice is that it depends on the unknown parameter $\gamma^*$.  A way to overcome this issue is by computing the value of $\eta$ iteratively.
Based on Lemma \ref{rholimlemma} and the above observation, we propose the following procedure.
\begin{enumerate}
\item Choose some functions $f_1,\ldots, f_m: \Omega\to \R$ that together determine the location in the state space (for example, if $\Omega\subset \R^m$, then $f_1,\ldots, f_m$ can be chosen as the coordinates). 
\item Run some initial amount of simulations $X_1,\ldots, X_{t_0+n}$, and in every step, save the values $f_1(X_i),\ldots, f_m(X_i)$.
\item Compute $\hat{\gamma}^*$ based on $\eta=1$ and $f=f_1, \ldots, f_m$. Denote the minimum of these values by $\gamma^*_{\min}(1)$. Compute $\eta(1):=\frac{\log(n\gamma^*)}{4\log(1/(1-\gamma^*_{\min}(1)))}$.
\item Inductively assume we have already computed $\eta(i)$. Then compute $\gamma^*_{\min}(i+1)$ based on $\eta(i)$. If $\gamma^*_{\min}(i+1)\ge \gamma^*_{\min}(i)$, then stop, and let $\hat{\gamma}^*:=\gamma^*_{\min}(i)$. Otherwise compute $\eta(i+1)$ and repeat this step.
\item If the initial amount of simulations $n$ satisfies that $n>100/\hat{\gamma}^*$, accept the estimate, otherwise choose $n=200/\hat{\gamma}^*$ and restart from Step 2.
\end{enumerate}
The motivation for choosing $f_1,\ldots, f_m$ in this way is that we use all the available information about the Markov chain, and thus the estimator is expected to be more accurate than if we would only use a subset of this information. The motivation for Step 5 is to ensure that we have sufficient initial data for the estimate.

An illustration of this iterative procedure based on our case study (see Section \ref{sec:results}) is shown in Figure \ref{fig:gamma}. Here we use the components of the Markov chain's state (the value of the model parameters $k_1$ to $k_4$) for the estimation. 
Figure \ref{fig:gamma_hist} shows a histogram of the estimated spectral gaps for the $1000$ independent chains used in our case study.

\begin{figure}[h]
\includegraphics[width=0.45\textwidth]{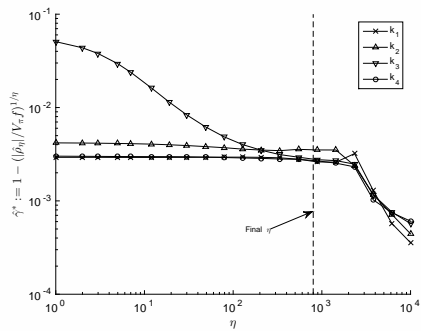}
\caption{Estimated $\hat{\gamma}^*$ for different values of $\eta$ based on the Markov chain states in the space of model parameters $k_1$,$k_2$,$k_3$,$k_4$. The final value of $\eta$ is shown, found according to the proposed iterative procedure.}
\label{fig:gamma}
\end{figure}

\begin{figure}[h]
\includegraphics[width=5cm]{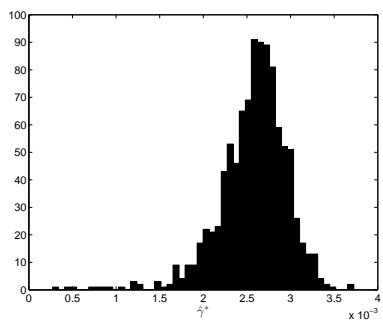}
\caption{Histogram of estimated $\hat{\gamma}^*$ for $1000$ independent chains.}
\label{fig:gamma_hist}
\end{figure}

\subsection{Simulation details}
Here we provide additional details on the JAK-STAT pathway model case study. 
The species in the model are listed in Table \ref{tab:species}. 
\begin{table}[ht]
\centering
\begin{tabular}{|l|p{4.4cm}|p{1cm}|}
\hline
\bf{Name} & \bf{Description} &\bf{Initial amount} \\ \hline
Epo & Erythropoietin, input stimulus & 2.0 \\ \hline
STAT & Unphosphorylated STAT in cytoplasm & 0\\ \hline
STATp & Phosphorylated STAT in cytoplasm & 0\\ \hline
STATpd & Phosphorylated STAT dimer in cytoplasm & 0\\ \hline
STATn & STAT in nucleus & 0\\ \hline
$X_1 \ldots X_K$ & Delay in STAT exiting nucleus & 0\\ \hline
\end{tabular}
\caption{Species in the JAK-STAT pathway model.}
\label{tab:species}
\end{table}

The ODE equations governing the model dynamics are as follows. 
\begin{align*}
	\frac{\mathrm{d}[\mathrm{STAT}]}{\mathrm{d}t} & = -k_1[\mathrm{STAT}][\mathrm{Epo}] + 2k_4[X_K]\\
	\frac{\mathrm{d}[\mathrm{STATp}]}{\mathrm{d}t} & = k_1[\mathrm{STAT}][\mathrm{Epo}] - k_2[\mathrm{STATp}]^2\\
	\frac{\mathrm{d}[\mathrm{STATpd}]}{\mathrm{d}t} & = -k_3[\mathrm{STATpd}] + 0.5k_2[\mathrm{STATp}]^2\\
	\frac{\mathrm{d}[\mathrm{X_1}]}{\mathrm{d}t} & = k_3[\mathrm{STATpd}] -k_4[\mathrm{X}_1]\\
	\frac{\mathrm{d}[\mathrm{X_j}]}{\mathrm{d}t} & = k_4[\mathrm{X}_{j-1}]-k_4[\mathrm{X}_j] \quad , \quad \quad j=2\ldots K \\
	\frac{\mathrm{d}[\mathrm{STATn}]}{\mathrm{d}t} & = k_3[\mathrm{STATpd}]-k_4[\mathrm{X}_K].
\end{align*}
Here $K$ is the number of delay variables, set to $K=10$.

The observation model is defined as follows.
\begin{align*}
y_1 &= [\mathrm{STATp}]+2[\mathrm{STATd}] \\
y_2 &= [\mathrm{STAT}]+[\mathrm{STATp}]+2[\mathrm{STATd}].
\end{align*}
Here $y_1$ represents total phosphorylated STAT and $y_2$ represents total STAT in cytoplasm. Observations are available at $18$ discrete time points up to $60$ minutes. The data points, as well as standard deviations for each point are available in \cite{swameye2003}. 

The parameter vector of the model is $\tht=(k_1,k_2,$ $k_3,k_4)$. We assume a uniform prior over a hypercube defined by a bounded interval for each parameter.
The parameter ranges and the covariance matrix diagonal entries ($\sigma_{\mathrm{MH}}$) used to define the MCMC proposal distribution are provided in Table \ref{table:params}.

\begin{table}[h]
\centering
	\begin{tabular}{c|c|c}
	Parameter 	& Range 	& $\sigma_{\mathrm{MH}}$ \\ \hline 
	$k_1$ 		& $[0,5]$			& $0.02$ \\
	$k_2$		& $[0,30]$			& $0.5$ \\
	$k_3$		& $[0,1]$			& $0.01$ \\
	$k_4$		& $[0,5]$			& $0.02$
	\end{tabular}
\caption{Parameter ranges and entries in the proposal covariance matrix.}
\label{table:params}
\end{table}

\end{document}